\documentclass[a4paper,oneside,fleqn,12pt]{article}
\usepackage{amsmath,amssymb,amsfonts,amsthm,type1cm,bm,color,colortbl, mathrsfs}
\usepackage{graphicx,psfrag,epsf,booktabs, geometry}
\usepackage{natbib}
\usepackage{rotating}
\usepackage{authblk}%
\RequirePackage[colorlinks,citecolor=blue,urlcolor=blue,hyperfootnotes=false]{hyperref}
\usepackage{accents}
\usepackage{yfonts}
\usepackage{pdflscape}
\usepackage{setspace}
\usepackage{longtable}
\usepackage{bigstrut}
\usepackage{mathtools}
\usepackage{enumitem}
\mathtoolsset{showonlyrefs=true}
\usepackage[dvipsnames]{xcolor}
\usepackage{comment}
\usepackage{float}
\usepackage{multibib}
\newcites{suppl}{References for Supplementary Material}

\DeclareMathOperator\E{\mathbb{E}}

\DeclareMathOperator\Pro{\mathbb{P}}

\def\cH{\mathcal{H}}

\def\cL{\mathcal{L}}

\def\cN{\mathcal{N}}

\def\cT{\mathcal{T}}

\def\bbR{\mathbb{R}}

\def\bbN{\mathbb{N}}

\newlength{\dhatheight}

\newtheorem{thm}{Theorem}%[section]
\newtheorem{lem}{Lemma}%[section]
\newtheorem{con}{Condition}%[section]
\newtheorem{rem}{Remark}%[section]
\newtheorem{defi}{Definition}%[section]
\newtheorem{prop}{Proposition}%[section]
\makeatletter
\def\section{\@startsection {section}{1}{\z@}{-3.5ex plus -1ex minus-.2ex}{2.3ex plus .2ex}{\large\bf}}
\makeatother

\makeatletter
\def\subsection{\@startsection {subsection}{1}{\z@}{-3.5ex plus -1ex minus-.2ex}{2.3ex plus .2ex}{\normalsize\bf}}
\makeatother

\title{\textbf{A Likelihood-Ratio Test for Verifying Weak Stochastic Transitivity}}

\author[$*$]{\textsc{Masaki Toyoda}%\thanks{Yoshimasa Uematsu is Associate Professor, Department of Social Data Science, Hitotsubashi University, 2-1 Naka, Kunitachi, Tokyo 186-8601, Japan (E-mail: yoshimasa.uematsu@r.hit-u.ac.jp).}
}

\affil[*]{\textit{Department of Economics, Hitotsubashi University}}

\begin{document}

\renewcommand{\theequation}{\thesection.\arabic{equation}}
\makeatletter
\@addtoreset{equation}{section}
\makeatother

\maketitle

\begin{abstract}
This paper proposes and studies a likelihood-ratio test of the null hypothesis that weak stochastic transitivity (WST) does not hold. This is the reverse of the formulation commonly used in the literature, where WST is set as the null hypothesis. We show that, even when the number of items grows with the number of comparisons per pair, the uniform size converges to the nominal level, with the critical value determined by a chi-bar-square distribution. We further establish that the Type II error converges uniformly to zero under a sufficient signal-strength condition. Simulation studies demonstrate good finite-sample Type I error control and show that power increases with the sample size and signal strength.
\end{abstract}
\textbf{Keywords.} Pairwise comparisons, Likelihood-ratio test, Weak stochastic transitivity, Chi-bar-square distribution.

\section{Introduction}\label{sec:intro}
In this paper, we study statistical inference for pairwise comparison data, which represent the outcomes of comparisons between pairs of items. Such data are widely used to estimate a ranking for a set of items, and arise in many areas such as psychology \citep{thurstone1927method}, chess \citep{zermelo1929die, elo1967proposed}, sports \citep{massey1997statistical}, dueling bandits \citep{yue2009interactively, jamieson2015sparse}, and reinforcement learning from human feedback (RLHF) \citep{rafailov2023direct, chiang2024chatbot}.

To analyze pairwise comparison data, researchers often assume transitivity. Several notions of stochastic transitivity have been proposed: see, for example, \cite{oliveira2018stochastic}. In this paper, we focus on one of the least restrictive notions, weak stochastic transitivity (WST). WST requires that, for any items $i,j,k$, if $i$ beats $j$ with probability at least $1/2$ and $j$ beats $k$ with probability at least $1/2$, then $i$ must also beat $k$ with probability at least $1/2$. Many standard models impose assumptions that imply WST. For example, the Bradley-Terry model \citep{bradley1952rank} and the Thurstone model \citep{thurstone1927method}, two of the most widely used models for pairwise comparisons, assume that each item has a one-dimensional score, and the ordering of these scores implies WST. Therefore, if the transitivity assumption is violated, then the results of analyses based on such models can be difficult to interpret.

Nevertheless, intransitive patterns have been empirically reported in a variety of settings, including psychology \citep{tversky1969intransitivity} and e-sports \citep{chen2016modeling, duan2017generalized, makhijani2019parametric}. Hypothesis tests have been developed to detect violations of transitivity, typically taking transitivity as the null hypothesis. For example, \cite{iverson1985statistical} developed an asymptotic likelihood-ratio test. In empirical studies, \cite{shafir1994intransitivity} and \cite{waite2001intransitive} used binomial tests for individual pairs, sometimes with multiplicity adjustments. Such tests are useful for detecting violations of WST. However, failure to reject WST does not provide affirmative evidence that WST holds; it only indicates that the available evidence against WST is insufficient.

To obtain affirmative evidence for WST, we need a test in the opposite direction: the null hypothesis is that WST does not hold. Rejection can support the use of procedures whose validity requires WST alone, such as \cite{falahatgar2018limits}.

In related work, \cite{haddenhorst2021testing} proposed a sequential test based on multiple binomial tests. However, their theory assumes that every pairwise comparison probability is different from $1/2$. Therefore, their test has no theoretical guarantee at the boundary cases of WST. We revisit this point in Section \ref{sec:simulations}.

In this paper, we propose and investigate a likelihood-ratio test for verifying WST. The main theoretical contributions are as follows. Theorem \ref{thm:m3reduction} establishes an exact finite-sample reduction of the worst-case Type I error to the three-item problem. Theorem \ref{thm:type1} characterizes the asymptotic size of the test for any number of items. Theorem \ref{thm:power} shows that the Type II error converges uniformly to zero under an appropriate signal-strength condition.

This paper is organized as follows. Section \ref{sec:problem} introduces the likelihood-ratio test. Section \ref{sec:theory} states the theoretical results for the proposed test. Section \ref{sec:simulations} investigates the finite-sample behavior of the test through simulations. Section \ref{sec:conclusion} concludes.

\section{Problem Statement}\label{sec:problem}
\subsection{Model}
Consider the set of $m\in\bbN$ items $[m]:=\{1,\dots,m\}$. For each pair of items $i,j\in[m]$ with $i<j$, we observe $n\in\bbN$ comparisons modeled as $X_{1,ij},X_{2,ij},\dots,X_{n,ij}\sim\text{Ber}(p_{ij})$, where $p_{ij}=1-p_{ji}$ is the probability that $i$ beats $j$. Suppose that all comparisons are mutually independent.

We focus on weak stochastic transitivity (WST) \citep{tversky1969intransitivity}, defined as follows.

\begin{defi}
    WST holds if for all distinct $i,j,k$,
    \begin{align}
        p_{ij}\geq1/2~~\text{and}~~p_{jk}\geq1/2~~\Longrightarrow~~p_{ik}\geq1/2.
    \end{align}
\end{defi}
If we write $i\succeq j$ when $p_{ij}\geq1/2$, WST states that $\succeq$ is transitive and defines a ranking with possible ties. See, for example, \citet{regenwetter2010testing} and \citet{falahatgar2018limits}. 

Our goal is to develop a test for
\begin{align}\label{eq:hypothesis}
    H_0:\text{WST does not hold}~~\text{vs.}~~H_1:\text{WST holds}.
\end{align}
The null hypothesis $H_0$ is equivalent to the existence of distinct $i,j,k$ such that $p_{ij}\geq1/2$, $p_{jk}\geq1/2$ and $p_{ik}<1/2$.

Let $p=(p_{ij})_{1\leq i<j\leq m}$ and let $\Theta_m=[0,1]^{m(m-1)/2}$ be the parameter space. Define the null parameter space by $\cH_{0m}=\{p\in\Theta_m:\text{WST does not hold}\}$. For example, when $m=3$, we can write
\begin{align}
    \cH_{03}=\left\{p\in\Theta_3:
    \begin{aligned}
        &\left(p_{12}\geq\frac{1}{2},p_{23}\geq\frac{1}{2},p_{13}\leq\frac{1}{2}, (p_{12},p_{23},p_{13})\neq\left(\frac{1}{2},\frac{1}{2},\frac{1}{2}\right)\right)\\
        &\text{or}
        \left(p_{12}\leq\frac{1}{2},p_{23}\leq\frac{1}{2},p_{13}\geq\frac{1}{2}, (p_{12},p_{23},p_{13})\neq\left(\frac{1}{2},\frac{1}{2},\frac{1}{2}\right)\right)
    \end{aligned}
    \right\}
\end{align}
Figure \ref{fig:nullregion} exhibits $\cH_{03}$. The set $\cH_{03}$ includes all its boundary points except $(1/2,1/2,1/2)$. For example, $p_{12}>1/2$, $p_{23}>1/2$ and $p_{13}=1/2$ is included in $\cH_{03}$ because it holds that $p_{31}\geq1/2$, $p_{12}\geq1/2$ and $p_{32}<1/2$.

\begin{figure}[htpb]
    \centering
    \includegraphics[width=0.8\linewidth]{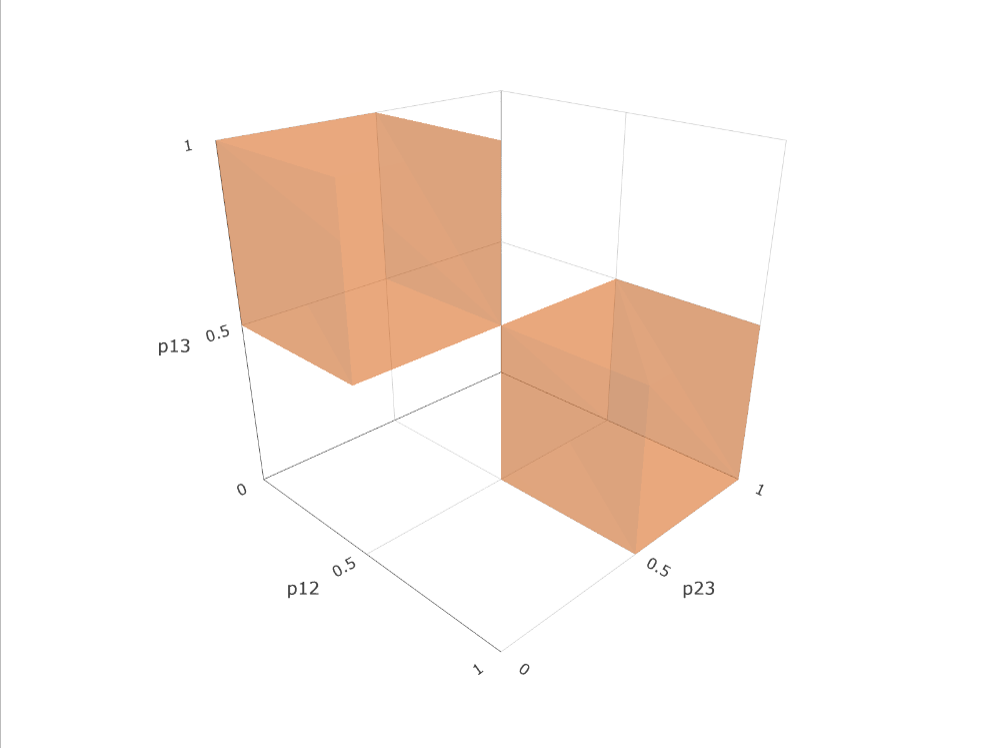}
    \caption{Colored region except $(1/2,1/2,1/2)$ is $\cH_{03}$.}
    \label{fig:nullregion}
\end{figure}

\subsection{Likelihood-ratio test}
Our test statistic for \eqref{eq:hypothesis} is the likelihood-ratio:
\begin{align}
        T_{n,m}=2\log\frac{\sup_{p\in\Theta_m}L_{n,m}(p)}{\sup_{p\in\cH_{0m}}L_{n,m}(p)},
\end{align}
where
\begin{align}
        L_{n,m}(p)=\prod_{1\leq i<j\leq m}p_{ij}^{n\hat{p}_{n,ij}}(1-p_{ij})^{n-n\hat{p}_{n,ij}},~~
        \hat{p}_{n,ij}=\frac{1}{n}\sum_{t=1}^{n}X_{t,ij}.
\end{align}
We can directly compute $T_{n,m}$ as follows. For any $i<j<k$, let $\hat{r}_{ijk}^{(1)}=\hat{p}_{n,ij}$, $\hat{r}_{ijk}^{(2)}=\hat{p}_{n,jk}$ and $\hat{r}_{ijk}^{(3)}=1-\hat{p}_{n,ik}$. Define $D_n^-(x)=2n\text{KL}\left(x\;\middle\|\;1/2\right)\mathbf{1}\{x<1/2\}$ and $D_n^+(x)=2n\text{KL}\left(x\;\middle\|\;1/2\right)\mathbf{1}\{x>1/2\}$, where
\begin{align}
    \text{KL}\left(x\;\middle\|\;1/2\right)=\begin{cases}
            x\log(2x)+(1-x)\log(2-2x)~&x\in(0,1)\\
            \log2~&x\in\{0,1\}.
        \end{cases}
\end{align}
Then, we have
\begin{align}
    T_{n,m}=\min_{i<j<k}\min\left(\sum_{l=1}^{3}D_n^-(\hat{r}_{ijk}^{(l)}),\sum_{l=1}^{3}D_n^+(\hat{r}_{ijk}^{(l)})\right).
\end{align}
For a critical value $c>0$, our likelihood-ratio test rejects $H_0$ if $T_{n,m}>c$.

\section{Theory}\label{sec:theory}
Throughout this paper, we write $\Pro_{p}$ to emphasize that the probability is evaluated under the true parameter $p\in\Theta_m$. In this section, we investigate the asymptotic behavior of $\Pro_{p}(T_{n,m}>c)$.

\subsection{Type I error control}\label{sec:theorytype1}
We first establish an exact finite-sample reduction for the worst-case Type I error.

\begin{thm}\label{thm:m3reduction}
    For any $n\in\bbN$, any $m\geq3$ and any $c>0$, it holds that
    \begin{align}\label{eq:m3reduction}
        \sup_{p\in\cH_{0m}}\Pro_p(T_{n,m}>c)=\sup_{p\in\cH_{03}}\Pro_p(T_{n,3}>c).
    \end{align}
\end{thm}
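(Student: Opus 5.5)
We prove the two inequalities in \eqref{eq:m3reduction} separately.

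\emph{The ``$\geq$'' direction.} Fix $q=(q_{12},q_{23},q_{13})\in\cH_{03}$. We embed $q$ into $\cH_{0m}$ by setting $p_{12}=q_{12}$, $p_{23}=q_{23}$, $p_{13}=q_{13}$ and choosing the remaining coordinates $p_{ij}\in\{0,1\}$ so that the extra items are deterministic. Concretely, take $p_{ij}=1$ whenever $i<j$ and $j\geq4$, so that every item in $\{1,2,3\}$ beats every larger item surely, and among the extra items the smaller index always wins. Then WST still fails on $\{1,2,3\}$, so $p\in\cH_{0m}$.

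Next we show that every triple other than $(1,2,3)$ contributes at least $c$, indeed a deterministic large value. Any such triple $i<j<k$ has $k\geq4$, so $\hat p_{n,ik}=1$ and $\hat p_{n,jk}=1$ almost surely. This gives $\hat r^{(2)}=1$ and $\hat r^{(3)}=0$. Hence $\sum_l D_n^-(\hat r^{(l)})\ge D_n^-(0)=2n\log 2$ and $\sum_l D_n^+(\hat r^{(l)})\ge D_n^+(1)=2n\log2$, so that triple contributes at least $2n\log 2$. Note that this triple is transitive, not cyclic.

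Since $T_{n,m}$ is the minimum over triples, $T_{n,m}=\min(T_{n,3}(\hat p_{12},\hat p_{23},\hat p_{13}),\ \text{terms}\geq 2n\log2)$. If $c<2n\log2$, then $\{T_{n,m}>c\}=\{T_{n,3}>c\}$ almost surely. If $c\geq 2n\log2$, then $T_{n,3}\le 2n\log2$ always, because each triple's minimum is at most the value attained by flipping one coordinate. Check: $\sum D^-$ consists of KL terms each at most $2n\log2$, and the minimum of the two sums is bounded by the smaller side. The bound is only needed when the right side is positive, so I would verify $\min(\sum D^-,\sum D^+)\le 2n\log2$, or else argue that $\Pro(T_{n,3}>c)$ is then handled trivially. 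Either way $\Pro_p(T_{n,m}>c)=\Pro_q(T_{n,3}>c)$, which gives ``$\geq$''.

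\emph{The ``$\leq$'' direction.} This is the main obstacle. Fix $p\in\cH_{0m}$; it has a violating triple, say $(a,b,c')$ with $p_{ab}\ge1/2$, $p_{bc'}\ge1/2$ and $p_{ac'}<1/2$. Because $T_{n,m}$ is the minimum over all triples, $T_{n,m}\le T^{(a,b,c')}$, the statistic computed on that triple alone. After relabeling (the formula is symmetric under reindexing, using $p_{ji}=1-p_{ij}$ and the two-sided min), $T^{(a,b,c')}$ has the same law as $T_{n,3}$ under the restricted parameter $q=(p_{ab},p_{bc'},p_{ac'})$ suitably oriented, and this $q$ lies in $\cH_{03}$. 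Therefore $\Pro_p(T_{n,m}>c)\le\Pro_q(T_{n,3}>c)\le\sup_{\cH_{03}}$.

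The delicate point is checking that relabeling maps the triple statistic to exactly $T_{n,3}$ in distribution. This requires that $\min(\sum D^-,\sum D^+)$ is invariant under permutations of the three items: a permutation either cyclically permutes the $\hat r^{(l)}$, or replaces $\hat r$ by $1-\hat r$, which swaps $D^-$ and $D^+$. I would verify this directly by computing it for a transposition.
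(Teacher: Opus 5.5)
Your proposal is correct and is essentially the paper's proof. The upper bound comes from dominating $T_{n,m}$ by the statistic of one violating triple; the paper avoids your relabeling step by writing $\cH_{0m}$ as a union over sorted triples $\tau$ with $r_\tau\in([1/2,1]^3\cup[0,1/2]^3)\setminus\{r_0\}$. The lower bound uses the same embedding $p_{ij}=1$ for $j\geq 4$.

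The one step you left vague, $\min\bigl(\sum_l D_n^-(\hat r^{(l)}),\sum_l D_n^+(\hat r^{(l)})\bigr)\leq 2n\log 2$, holds for a pigeonhole reason. Three coordinates cannot have two strictly below and two strictly above $1/2$, so one of the two sums has at most one nonzero term, and that term is at most $2n\log 2$. This is exactly the paper's argument. It shows the $(1,2,3)$ term attains the minimum almost surely, so your case split on $c$ versus $2n\log 2$ is unnecessary.
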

The proof of Theorem \ref{thm:m3reduction} is given in Appendix \ref{proofthm:m3reduction}. Although $T_{n,m}$ is the minimum over $\binom{m}{3}$ triples, Theorem \ref{thm:m3reduction} states that the size of the test is exactly the same as that for $m=3$, for any $n\in\bbN$ and any $c>0$.

The following theorem is our main result for the asymptotic Type I error guarantee.

\begin{thm}\label{thm:type1}
    Let $\{m_n\}_n$ be any sequence of integers satisfying $m_n\geq3$. Then, for any $c>0$, it holds that
        \begin{align}
            \lim_{n\to\infty}\sup_{p\in\cH_{0m_n}}\Pro_{p}(T_{n,m_n}>c)=\frac{1}{2}\Pro(\chi_1^2>c)+\frac{1}{4}\Pro(\chi_2^2>c).
        \end{align}
\end{thm}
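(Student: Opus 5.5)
By Theorem \ref{thm:m3reduction}, for every $n$ we have $\sup_{p\in\cH_{0m_n}}\Pro_p(T_{n,m_n}>c)=\sup_{p\in\cH_{03}}\Pro_p(T_{n,3}>c)$. The sequence $m_n$ therefore plays no role, and it suffices to show that $s_n:=\sup_{p\in\cH_{03}}\Pro_p(T_{n,3}>c)\to\alpha(c):=\frac12\Pro(\chi_1^2>c)+\frac14\Pro(\chi_2^2>c)$. For $m=3$ the statistic is $T_{n,3}=\min(S^-,S^+)$, where $S^\pm=\sum_{l=1}^3D_n^\pm(\hat r^{(l)})$ and $\hat r=(\hat p_{12},\hat p_{23},1-\hat p_{13})$. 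In terms of $r=(p_{12},p_{23},1-p_{13})$, the null set consists of those $r\neq(1/2,1/2,1/2)$ such that either all coordinates are $\ge 1/2$ or all are $\le 1/2$. By the symmetry $r\mapsto 1-r$, which swaps $D^+$ and $D^-$, it is enough to treat the branch with all coordinates $\ge 1/2$. I will prove a lower bound and an upper bound separately.

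\textbf{Lower bound.} I take $r^{(n)}=(1/2,1/2,1/2+\delta_n)$ with $\delta_n\to0$ and $\sqrt n\,\delta_n\to\infty$. Write $Z_l=2\sqrt n(\hat r^{(l)}-1/2)$. By the CLT, $(Z_1,Z_2)$ converges jointly to independent $N(0,1)$ variables, while $Z_3\to+\infty$ in probability. The second-order expansion $2n\,\mathrm{KL}(x\|1/2)=n(2x-1)^2\cdot 2+o_p(1)=Z^2/... $ gives, more precisely, $2n\,\mathrm{KL}(\hat r\|1/2)=Z^2(1+o_p(1))$ uniformly near $1/2$. Hence
$S^-\Rightarrow (Z_1^-)^2+(Z_2^-)^2$ and $S^+\ge (Z_1^+)^2+(Z_2^+)^2+Z_3^2\to\infty$ on the event that both are finite. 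So $T_{n,3}\Rightarrow (Z_1^-)^2+(Z_2^-)^2$. This limit is $0$ with probability $1/4$, equals $\chi^2_1$ with probability $1/2$, and equals $\chi^2_2$ with probability $1/4$. Its tail at $c>0$ is exactly $\alpha(c)$, and it is continuous at $c$. Therefore $\liminf s_n\ge\alpha(c)$.

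\textbf{Upper bound.} This is the main obstacle, because it must hold uniformly over the null set, including its boundary points and points where some $r_l\in\{0,1\}$. Two facts drive it. First, $D_n^-$ is monotone, so $S^-$ is stochastically decreasing in each $r_l$. Second, $T\le S^-$. I plan to argue by contradiction along a sequence $r^{(n)}$ with $\liminf_n\Pro_{r^{(n)}}(T>c)>\alpha(c)$, passing to subsequences so that each $\sqrt n(r_l-1/2)\to h_l\in[0,\infty]$.

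If at least two of the $h_l$ are infinite, say $h_1=h_2=\infty$, then $S^-\le D^-(\hat r_3)$ in the limit. Together with monotonicity in $r_3$, this bounds the probability by $\frac12\Pro(\chi_1^2>c)+o(1)<\alpha(c)$.

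In the remaining cases at least two of the $h_l$ are finite. Here I would use the local Gaussian limit $Z_l\Rightarrow N(2h_l,1)$, with the infinite coordinates tending to $+\infty$. Then $T\Rightarrow\min\bigl(\sum_l (Z_l^-)^2,\sum_l (Z_l^+)^2\bigr)$, which is the projection distance of $Z$ onto the union of the two orthant-cones. I then need to show that $\Pro(\min(\cdot)>c)$ is maximized over $h\in[0,\infty]^3\setminus\{0\}$ as some $h_l\to\infty$. To do this I would fix the infinite or finite coordinates and use monotonicity: increasing any $h_l$ decreases $\Pro(S^->\cdot)$ only in the coordinates that are negative, while making $S^+$ larger. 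A cleaner route is to note that $\{T>c\}\subseteq\{S^->c\}$. For $h$ with all three coordinates finite, a direct computation should show that $\Pro(\min(\cdot)>c)\le\Pro(S^-_{h_3=\infty}>c)=\alpha(c)$. For this I would condition on the sign of $Z_3$: when $Z_3>0$, $S^-$ involves only $Z_1,Z_2$ and the comparison is immediate; when $Z_3<0$, the $S^+$ term limits the gain, and this part requires a careful geometric check. I expect this three-dimensional inequality to be the hardest step.

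To make the subsequence argument rigorous I also need uniformity of the normal approximation on compacts of $h$. I would obtain this from Berry--Esseen bounds for the binomial combined with the uniform quadratic expansion of the KL term.
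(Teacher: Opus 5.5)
\textbf{Verdict: the upper bound has a genuine gap.} The inequality you leave unproved in the all-finite-$h$ case is the entire content of the local upper bound, and neither device you propose establishes it.

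Most of your outline is sound:
\begin{itemize}
\item the reduction to $m=3$ and the symmetry step;
\item the lower bound along $(1/2,1/2,1/2+\delta_n)$, modulo the garbled factor in the KL expansion (the correct statement is $2n\,\mathrm{KL}(\hat r\|1/2)=Z^2(1+o_p(1))$);
\item the cases with some $h_l=\infty$, where $\{T>c\}\subseteq\{S^->c\}$ plus coordinatewise monotonicity gives at most $\alpha(c)$, essentially as in the paper's Lemma \ref{theory:type1notlocal};
\item obtaining uniformity by compactness of $[0,\infty]^3$ along subsequences. This is a legitimate substitute for the paper's explicit Wasserstein-CLT bounds. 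Along a subsequence, Lindeberg--Feller already gives $Z_l\Rightarrow N(2h_l,1)$, so Berry--Esseen is unnecessary.
\end{itemize}

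The gap is the step you flag yourself. When all $h_l$ are finite you need, for $Y\sim N(2h,I_3)$ with $h\in[0,\infty)^3$,
\[
\Pro\Bigl(\min\Bigl(\sum_{l=1}^3 (Y_l^-)^2,\ \sum_{l=1}^3 (Y_l^+)^2\Bigr)>c\Bigr)\le \alpha(c):=\frac12\Pro(\chi^2_1>c)+\frac14\Pro(\chi^2_2>c).
\]
This must include $h=0$, which null sequences such as $r=(1/2,1/2,1/2+1/n)$ reach, so you cannot exclude it.

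Neither of your devices delivers this bound:
\begin{itemize}
\item Using $S^-$ alone fails already at $h=0$. There its limiting tail is $\frac38\Pro(\chi^2_1>c)+\frac38\Pro(\chi^2_2>c)+\frac18\Pro(\chi^2_3>c)$, which strictly exceeds $\alpha(c)$.
\item Conditioning on the sign of $Y_3$ fails when $h_1,h_2>0$. On $\{Y_3<0\}$ the available bound $g\le (Y_1^+)^2+(Y_2^+)^2$ has tail at least $\alpha(c)$, tending to $1$ as $h_1,h_2\to\infty$. With $h_3=0$ and $h_1=h_2$ large, the two conditional bounds add to about $1/2$. The true probability is only about $\frac12\Pro(\chi^2_1>c)$.
\end{itemize}
So the ``careful geometric check'' is not a routine verification.

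\textbf{The missing idea} (the paper's Lemmas \ref{theory:schur}, \ref{theory:normalincreasing} and \ref{theory:gaussian}) is to first push all the drift into a single coordinate, and only then condition on its sign.
\begin{itemize}
\item Set $U=(Y_1+Y_2)/2$ and $V=(Y_1-Y_2)/2$, which are independent. For fixed $U$ and $Y_3$, $g$ is non-decreasing in $|V|$, so $\{g>t\}=\{|V|>f_t(U,Y_3)\}$.
\item Replacing $(b_1,b_2)$ by $(b_1+b_2,0)$ leaves $U$ unchanged. It turns $V\sim N((b_1-b_2)/2,1/2)$ into $V'\sim N((b_1+b_2)/2,1/2)$.
\item The folded-normal tail $\Pro(|V|>a)$ is non-decreasing in $|\E V|$, so this replacement can only increase $\Pro(g>t)$.
\item Iterating gives $\Pro(g(b+Z)>t)\le\Pro(g((b_1+b_2+b_3+Z_1,Z_2,Z_3))>t)$.
\item Now $Z_2,Z_3$ are centered. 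Conditioning on the sign of the first coordinate gives $g\le (Z_2^-)^2+(Z_3^-)^2$ or $g\le (Z_2^+)^2+(Z_3^+)^2$. Both have tail exactly $\alpha(t)$, by independence and the symmetry of $(Z_2,Z_3)$.
\end{itemize}
Without an argument of this type, your upper bound is incomplete.
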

The proof of Theorem \ref{thm:type1} is given in Appendix \ref{proofthm:type1}. According to this theorem, if $c=c_\alpha$ is chosen so that $(1/2)\Pro(\chi_1^2>c_\alpha)+(1/4)\Pro(\chi_2^2>c_\alpha)=\alpha$, then the Type I error of our test $\mathbf{1}\{T_{n,m}>c_\alpha\}$ is asymptotically controlled by $\alpha$. For example, $c_\alpha\approx2.95,4.23,7.29$ for $\alpha=0.1,0.05,0.01$, respectively. Below, we provide several remarks on Theorem \ref{thm:type1}.

\begin{rem}\normalfont\textbf{Uniform size control.}
    The supremum in Theorem \ref{thm:type1} is taken over the null parameter space $\cH_{0m_n}$, which may change with $n$. Thus, the theorem provides uniform rather than pointwise asymptotic control. For example, when $m_n=m$ is fixed, a pointwise guarantee
    \begin{align}
        \limsup_{n\to\infty}\Pro_p(T_{n,m}>c_\alpha)\leq\alpha~~\text{for every fixed $p\in\cH_{0m}$}
    \end{align}
    does not rule out a sequence $q_n\in\cH_{0m}$ such that $\limsup_{n\to\infty}\Pro_{q_n}(T_{n,m}>c_\alpha)>\alpha$. In contrast, Theorem \ref{thm:type1} implies that for every sequence $p_n\in\cH_{0m_n}$, the limit superior of the false rejection probability is bounded by the same value.

    Moreover, Theorem \ref{thm:type1} gives the exact limiting size, rather than only an upper bound on its limit superior. Therefore, the asymptotic bound is sharp. That is, there exists a sequence $p_n\in\cH_{0m_n}$ such that $\Pro_{p_n}(T_{n,m_n}>c_\alpha)\to\alpha$. One such sequence is given by $p_{12}=p_{23}=1/2$, $p_{13}=0$, and $p_{ij}=1$ for all $n$ and $1\leq i<j\leq m_n$ with $j\geq4$, as used in the proof of Theorem \ref{thm:m3reduction}.
\end{rem}

\begin{rem}\normalfont\textbf{Arbitrary number of items.}
    Theorem \ref{thm:type1} holds for any sequence $\{m_n\}_n$ satisfying $m_n\geq3$, without any restriction on its growth rate relative to $n$. Thus, the number of items may be fixed or may diverge arbitrarily fast. This dimension-free result follows from the exact finite-sample reduction established in Theorem \ref{thm:m3reduction}.
\end{rem}

\begin{rem}\normalfont\textbf{Chi-bar-square distribution.}
    The asymptotic size in Theorem \ref{thm:type1} is the upper-tail probability of a chi-bar-square distribution. This form can be understood from the local geometry of the null parameter set. General likelihood theory characterizes the limiting likelihood-ratio statistic by the squared distance of a limiting Gaussian vector from the tangent cone of the null parameter set; see \cite{chernoff1954distribution, self1987asymptotic, drton2009likelihood, brazzale2024likelihood}.
    
    According to Theorem \ref{thm:m3reduction}, it is sufficient to consider the case $m_n=3$. As shown in the proof of Theorem \ref{thm:type1}, an asymptotically least-favorable null configuration can be taken as $(p_{12},p_{23},p_{13})=(1/2,1/2,0)$. At this configuration, $\hat{p}_{n,13}=0$ a.s., so the observation for the pair $(1,3)$ does not contribute to the limiting likelihood-ratio. After standardizing the first two coordinates, the relevant tangent cone is $\bbR_+^2=[0,\infty)\times[0,\infty)$. Also, independence across pairwise comparisons implies that the corresponding limiting Gaussian vector $(Z_1,Z_2)$ follows $N(0,I_2)$. Thus, its squared distance from $\bbR_+^2$ is $Z_1^2\mathbf{1}\{Z_1<0\}+Z_2^2\mathbf{1}\{Z_2<0\}$, which has the chi-bar-square distribution $(1/4)\chi_0^2+(1/2)\chi_1^2+(1/4)\chi_2^2$, where $\chi_0^2$ denotes a point mass at zero. Since $c>0$, $\chi_0^2$ does not contribute to the upper-tail probability, yielding the expression in Theorem \ref{thm:type1}.
\end{rem}

\subsection{Type II error control}\label{sec:theorytype2}
In this section, we establish a uniform Type II error guarantee in Theorem \ref{thm:power}.

\begin{defi}\label{def:signalspace}
        For any $m\geq3$ and $\Delta>0$, let $\cH_{1m}(\Delta)=\{p\in\Theta_{m}:\inf_{q\in\cH_{0m}}\|p-q\|_\infty\geq \Delta\}$.
\end{defi}

\begin{con}\label{con:signal}
        Let $\{\Delta_n\}_n$ be a sequence in $(0,1/2]$ and $\{m_n\}_n$ be a sequence of integers satisfying $m_n\geq3$. Assume that $n\Delta_n^2/\log m_n\to\infty$ as $n\to\infty$.
\end{con}
In words, the set $\cH_{1m}(\Delta)$ in Definition \ref{def:signalspace} consists of parameters whose $\ell_\infty$-distance from $\cH_{0m}$ is at least $\Delta$. Condition \ref{con:signal} requires the sequence of signal strengths $\{\Delta_n\}_n$ to dominate $\sqrt{\log m_n/n}$.

\begin{thm}\label{thm:power}
    If Condition \ref{con:signal} holds, then for any $c>0$, it holds that
        \begin{align*}
            \lim_{n\to\infty}\sup_{p\in\cH_{1m_n}(\Delta_n)}\Pro_{p}(T_{n,m_n}\leq c)=0.
        \end{align*}
\end{thm}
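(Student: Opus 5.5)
Since $T_{n,m}=\min_{i<j<k}\min(S^-_{ijk},S^+_{ijk})$ with $S^\pm_{ijk}=\sum_{l=1}^3D_n^\pm(\hat r^{(l)}_{ijk})$, the event $\{T_{n,m}\le c\}$ is the union over triples $i<j<k$ of $\{S^-_{ijk}\le c\}\cup\{S^+_{ijk}\le c\}$. The plan is therefore to apply a union bound over the $\binom{m}{3}\le m^3$ triples and the two signs. The task then reduces to a per-triple, per-sign bound that is uniform in $p\in\cH_{1m}(\Delta)$ and is exponentially small in $n\Delta^2$. Condition \ref{con:signal} makes $m_n^3e^{-\kappa n\Delta_n^2}=\exp(3\log m_n-\kappa n\Delta_n^2)\to0$ for any fixed $\kappa>0$, which gives the conclusion.

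\textbf{Step 1 (deterministic geometry).} I will show that if $p\in\cH_{1m}(\Delta)$, then for every triple $i<j<k$ and each sign, at least one of the three population quantities $r^{(l)}_{ijk}$ (namely $p_{ij}$, $p_{jk}$, $1-p_{ik}$) lies at least $\Delta$ on the ``wrong'' side of $1/2$. Concretely, for the $-$ sign some $l$ has $r^{(l)}\ge 1/2+\Delta$, and for the $+$ sign some $l$ has $r^{(l)}\le1/2-\Delta$.

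To see this, suppose to the contrary that $r^{(l)}<1/2+\Delta$ for all $l$. Define $q$ by moving each $r^{(l)}>1/2$ down to exactly $1/2$, and then pushing one coordinate, say $p_{ik}$, slightly so that $1-q_{ik}<1/2$ strictly. Leave all other pairs unchanged. Then $q$ violates WST on this triple: indeed $q_{ji}\ge1/2$, $q_{ik}\ge 1/2$ and $q_{jk}$ is on the wrong side, or the analogous cyclic pattern from the description of $\cH_{03}$. Care is needed to avoid the excluded point $(1/2,1/2,1/2)$, which is handled by the small push. We also have $\|p-q\|_\infty<\Delta$ after taking the push small enough, so $p\notin\cH_{1m}(\Delta)$, a contradiction. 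The symmetric argument handles the $+$ sign.

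I expect this step to be the main obstacle, because one must match the sign conventions of $D_n^\pm$ with the two cyclic orientations in $\cH_{03}$. One must also check that the constructed $q$ lies in $\cH_{0m}$ and not merely on its closure; the infimum in Definition \ref{def:signalspace} lets me use approximating points $q$, which is what makes the small push legitimate.

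\textbf{Step 2 (probabilistic bound).} Fix a triple and the $-$ sign, and take $l$ from Step 1 with $r^{(l)}\ge1/2+\Delta$. Since $D_n^-\ge0$, we have $S^-\ge D_n^-(\hat r^{(l)})$. If $\hat r^{(l)}\ge1/2$ then $D_n^-(\hat r^{(l)})=0$, so the reverse inclusion is what we need: $\{S^-\le c\}$ requires the other coordinates to be small too. The correct reading is that $S^-$ sums over coordinates below $1/2$, and $S^-\le c$ is the event consistent with the triple being "null-like" in the $-$ orientation.

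I will re-derive from Step 1 the correct implication. On $\{S^-_{ijk}\le c\}$, every $\hat r^{(l)}$ with $\hat r^{(l)}<1/2$ satisfies $2n\,\mathrm{KL}(\hat r^{(l)}\|1/2)\le c$, hence $\hat r^{(l)}\ge 1/2-\sqrt{c/(4n)}$ by Pinsker. So all three $\hat r^{(l)}\ge1/2-\sqrt{c/(4n)}$. Step 1 will accordingly be stated as: $p\in\cH_{1m}(\Delta)$ forces, for each sign, some $r^{(l)}$ at distance at least $\Delta$ on the side opposite to that penalized by the sign. With this, $\{S^-\le c\}$ requires $\hat r^{(l)}-r^{(l)}\ge \Delta-\sqrt{c/(4n)}\ge\Delta/2$ for large $n$; this is uniform because $n\Delta_n^2\to\infty$. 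Hoeffding's inequality then bounds the probability by $e^{-n\Delta^2/2}$.

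\textbf{Step 3.} Summing, $\sup_{p\in\cH_{1m_n}(\Delta_n)}\Pro_p(T_{n,m_n}\le c)\le 2m_n^3e^{-n\Delta_n^2/2}\to0$ under Condition \ref{con:signal}.
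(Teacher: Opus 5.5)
Your argument is correct in substance and takes a different route from the paper, but Step 1 needs its signs fixed. The ``concretely'' sentence pairs them backwards. On $\{S^-_{ijk}\le c\}$ all $\hat r^{(l)}_{ijk}\ge 1/2-\sqrt{c/(4n)}$, so for the $-$ sign you need some $r^{(l)}_{ijk}\le 1/2-\Delta$, and for the $+$ sign some $r^{(l)}_{ijk}\ge 1/2+\Delta$. Your later phrase ``on the side opposite to that penalized by the sign'' is also backwards, although the inequality $\hat r^{(l)}-r^{(l)}\ge\Delta-\sqrt{c/(4n)}$ you then use is the right one.

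No harm is done, because your projection construction and its mirror image prove both facts for every triple whenever $p\in\cH_{1m}(\Delta)$: $\min_l r^{(l)}_{ijk}\le 1/2-\Delta$ and $\max_l r^{(l)}_{ijk}\ge 1/2+\Delta$. Membership of the constructed $q$ in $\cH_{0m}$ is cleanest via $\cH_{0m}=\bigcup_\tau\cH_{0m}^{(\tau)}$: the triple vector lands in $[0,1/2]^3\setminus\{r_0\}$ or $[1/2,1]^3\setminus\{r_0\}$. Your verbal WST check needs a case split when a coordinate equals $1/2$.

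The paper avoids triple-wise geometry altogether. It writes $T_{n,m}=2n\inf_{q\in\cH_{0m}}\sum_{i<j}\text{KL}(\hat p_{n,ij}\|q_{ij})$, bounds the sum below by its maximum, and applies Pinsker to get $T_{n,m}\ge 4n(\inf_{q\in\cH_{0m}}\max_{i<j}|\hat p_{n,ij}-q_{ij}|)^2$. It then uses the triangle inequality for the $\ell_\infty$ distance to $\cH_{0m}$ and two-sided Hoeffding over the $\binom{m}{2}$ pairs, obtaining $m_n^2e^{-n\Delta_n^2/2}$. That argument is shorter and uses nothing about $\cH_{0m}$ beyond $\ell_\infty$ separation, so it applies verbatim to a likelihood-ratio test for any null set.

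Your route works with the closed form of $T_{n,m}$ and needs the geometric lemma. In exchange it makes explicit which comparison must deviate, and your Step 1 is in fact an exact characterization of $\cH_{1m}(\Delta)$ (the converse also holds). Your union bound over $2\binom{m}{3}$ events costs an extra factor of $m_n$, which is immaterial under Condition \ref{con:signal}. It can even be removed: each of your events lies inside a one-sided deviation event for a single pair, so only $2\binom{m}{2}$ events are needed.
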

The proof of Theorem \ref{thm:power} is given in Appendix \ref{proofthm:power}. Theorem \ref{thm:power} says that uniformly over $\cH_{1m_n}(\Delta_n)$, the Type II error converges to zero. However, Condition \ref{con:signal} is only a sufficient condition. It remains open whether weaker signal strengths lead to the same Type II error control.

\section{Simulations}\label{sec:simulations}
We investigate the finite-sample Type I error and power of our likelihood-ratio test. For each pair $(i,j)$ with $1\leq i<j\leq m$, we generate $X_{1,ij},\dots,X_{n,ij}\sim\text{Ber}(p_{ij})$ independently. We consider $m\in\{3,10,15\}$ and $n\in\{100,300,500,700,900\}$. We set the nominal level to $\alpha=0.05$, which yields the critical value of our test as $c_\alpha\approx4.23$. All reported results are based on $2000$ replications. Sections \ref{sec:simulationtype1} and \ref{sec:simulationpower} investigate the Type I error and power, respectively.

We compare our test with Algorithm 2 of \cite{haddenhorst2021testing}. To the best of our knowledge, their framework is the only existing theoretically guaranteed framework that is applicable to the test for $H_0:\text{WST does not hold}$. However, the theoretical guarantee of their procedure requires a pre-specified $h>0$ such that $|p_{ij}-1/2|>h$ for every pair $(i,j)$; we set $h=0.04$ and revisit this point in the subsequent sections. Further implementation details are provided in Appendix \ref{sec:haddenhorst}.

\subsection{Type I error}\label{sec:simulationtype1}
To investigate the Type I error, we consider two configurations: a least-favorable null case and an interior null case.

For the least-favorable null, we set $p_{12}=p_{23}=1/2$, $p_{13}=0$ and $p_{ij}=1$ for all other $i<j$. Then, the probability of rejection of our likelihood-ratio test converges to $\alpha$ under this configuration. On the other hand, the procedure of \cite{haddenhorst2021testing} has no theoretical guarantee in this configuration because $p_{12}=p_{23}=1/2$, whereas their procedure requires $|p_{ij}-1/2|>h>0$ for every pair $i<j$. The result is shown in the upper row of Figure \ref{fig:type1}. The Type I error of our test is close to the nominal level $\alpha=0.05$ for all $m$ and $n$. In contrast, the Type I error of the test of \cite{haddenhorst2021testing} increases with $n$ and substantially exceeds $\alpha$.

For the interior null, we set $p_{13}=0.4$ and $p_{ij}=0.6$ for all other $i<j$. This configuration lies in the interior of the null parameter space and satisfies $|p_{ij}-1/2|=0.1>0.04=h$ for every pair. Thus, the Type I error of the test of \cite{haddenhorst2021testing} is theoretically guaranteed. The result is shown in the lower row of Figure \ref{fig:type1}. For both tests and all $m$ and $n$, the null hypothesis is not rejected in any of $2000$ replications.

\begin{figure}[htpb]
    \centering
    \includegraphics[width=0.9\linewidth]{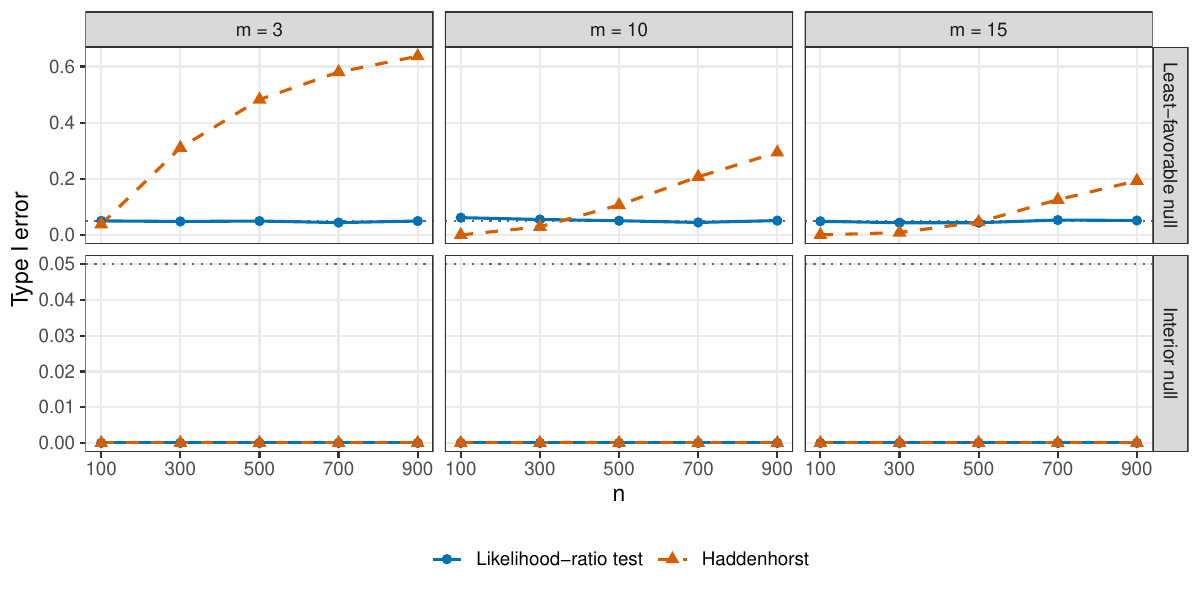}
    \caption{A plot of the Type I error. The blue solid line and the orange dashed line correspond to our likelihood-ratio test and the test of Haddenhorst et al. (2021), respectively. Upper row: least-favorable null. Lower row: interior null. The columns correspond to $m=3,10,15$ from left to right.}
    \label{fig:type1}
\end{figure}

\subsection{Power}\label{sec:simulationpower}
To investigate the power, for every pair $i<j$, we set $p_{ij}=1/2+\Delta$ with $\Delta\in\{0.05,0.10,0.15\}$. These configurations satisfy WST with the order $1\succeq2\cdots\succeq m$. Moreover, all these configurations satisfy $|p_{ij}-1/2|>h=0.04$ required by the test of \cite{haddenhorst2021testing}. We report the proportion of rejections across $2000$ replications as the power. The result is shown in Figure \ref{fig:power}.

For both tests, power generally increases with $n$ and $\Delta$. However, the relative performance of these tests depends on $\Delta$. For $\Delta=0.05$, the test of \cite{haddenhorst2021testing} generally has higher power. In contrast, for $\Delta=0.15$, our likelihood-ratio test attains high power more quickly, particularly when $m=15$.

\begin{figure}[htpb]
    \centering
    \includegraphics[width=0.9\linewidth]{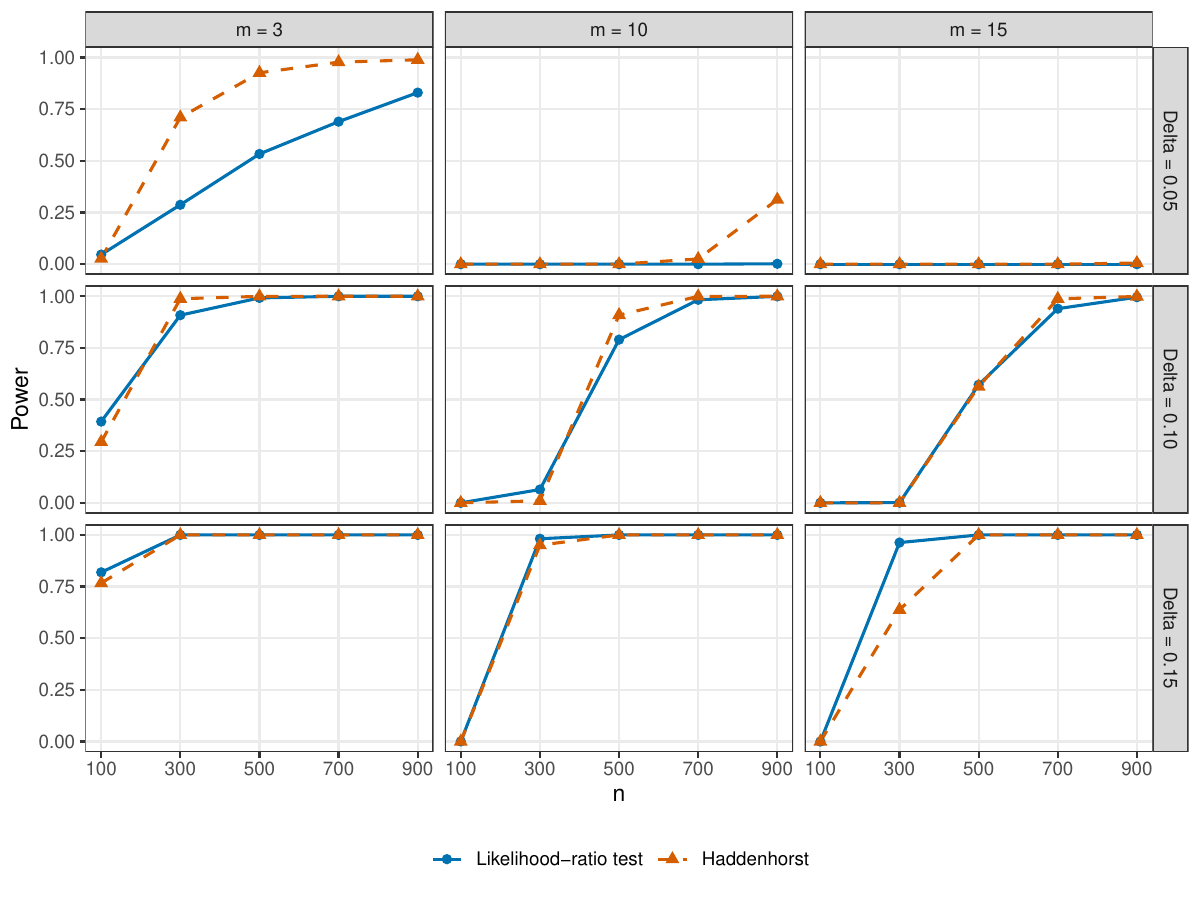}
    \caption{A plot of power. The blue solid line and the orange dashed line correspond to our likelihood-ratio test and the test of Haddenhorst et al. (2021), respectively. The rows correspond to $\Delta=0.05,0.10,0.15$ from top to bottom. The columns correspond to $m=3,10,15$ from left to right.}
    \label{fig:power}
\end{figure}

\section{Conclusion}\label{sec:conclusion}
In this paper, we proposed and studied a likelihood-ratio test of the null hypothesis that WST does not hold. We derived the uniform asymptotic size with an arbitrary number of items, and showed that the Type II error converges uniformly to zero under an appropriate signal-strength condition. Simulation studies demonstrated that the test has good finite-sample performance.

Several directions remain for future research. First, the signal-strength condition in Theorem \ref{thm:power} is only sufficient. It remains open whether this is also necessary. Second, developing tests for verifying other notions of stochastic transitivity, such as strong stochastic transitivity (SST), is an important direction. Such stronger notions can yield better sample-complexity guarantees for tasks such as estimating the pairwise win-probability matrix, identifying the best item, and ranking the items. For some of these tasks, WST alone may be insufficient \citep{shah2016stochastically, falahatgar2018limits}. Third, it is useful to extend our test to unbalanced or sparse comparison designs, where the number of comparisons varies across pairs and some pairs may be unobserved.

\section*{Acknowledgments}
%The authors thank the editor, associate editor, and anonymous referees for their valuable comments and suggestions, which significantly improved the paper. 
This work was supported by Grant-in-Aid for JSPS Fellows Grant Number JP25KJ1302, and by Nomura Foundation's Research Grant for Frontier of Financial and Capital Markets. The author reports there are no competing interests to declare.

\bibliographystyle{chicago}
\bibliography{ref}

\newpage
\appendix
\setcounter{page}{1}
\setcounter{section}{0}
\renewcommand{\theequation}{A.\arabic{equation}}
\setcounter{equation}{0}

\setcounter{table}{0}
\renewcommand{\thetable}{\thesection\arabic{table}}
\setcounter{figure}{0}
\renewcommand{\thefigure}{\thesection\arabic{figure}}
\numberwithin{equation}{section}

\begin{center}
{\Large Supplementary Material for} \\[7mm]
\textbf{\Large A Likelihood-Ratio Test for \\[3mm]
    Verifying Weak Stochastic Transitivity} \\[10mm]
\textsc{\large Masaki Toyoda$^*$} \\[5mm]
*\textit{\large Department of Economics, Hitotsubashi University}
\end{center}

\section{Proofs of Theorems}
\subsection{Proof of Theorem \ref{thm:m3reduction}}\label{proofthm:m3reduction}
\begin{proof}
    The desired equation immediately holds if $m=3$. Thus, we suppose that $m\geq4$.
    
    For any $m\geq4$, let
    \begin{align}
        \cT_m=\{(i,j,k):1\leq i<j<k\leq m\}.
    \end{align}
    For each $\tau=(\tau_1,\tau_2,\tau_3)\in\cT_m$, let $r_\tau=(p_{\tau_1\tau_2},p_{\tau_2\tau_3},1-p_{\tau_1\tau_3})$ and let $r_0=(1/2,1/2,1/2)$. Then, the null parameter space can be written as
    \begin{align}
        \cH_{0m}=\bigcup_{\tau\in\cT_m}\cH_{0m}^{(\tau)},~~\text{where}~~\cH_{0m}^{(\tau)}=\{p\in\Theta_m:r_\tau\in([1/2,1]^3\cup[0,1/2]^3)\setminus \{r_0\}\}.
    \end{align}
    Using this, it holds that
    \begin{align}
        T_{n,m}
        =\min_{\tau\in\cT_m}T_n^{(\tau)},~~\text{where}~~
        T_n^{(\tau)}=2\log\frac{\sup_{p\in\Theta_m}L_{n,m}(p)}{\sup_{p\in\cH_{0m}^{(\tau)}}L_{n,m}(p)}.
    \end{align}
    For any $p\in\cH_{0m}$, choose $\tau\in\cT_m$ such that $p\in\cH_{0m}^{(\tau)}$. For this $\tau$, let $p_\tau=(p_{\tau_1\tau_2},p_{\tau_2\tau_3},p_{\tau_1\tau_3})$. Then $p_{\tau}\in\cH_{03}$. Moreover, the distribution of $T_n^{(\tau)}$ under $p\in\cH_{0m}^{(\tau)}$ is the same as the distribution of $T_{n,3}$ under $p_\tau\in\cH_{03}$. This is because all terms corresponding to parameters other than $p_\tau$ are canceled from $T_n^{(\tau)}$. Therefore, for any $p\in\cH_{0m}$, we have
    \begin{align}
        \Pro_p(T_{n,m}>c)
        &=\Pro_p\left(\min_{\tau'\in\cT_m}T_n^{(\tau')}>c\right)
        \leq\Pro_p(T_n^{(\tau)}>c)\\
        &=\Pro_{p_\tau}(T_{n,3}>c)
        \leq\sup_{q\in\cH_{03}}\Pro_{q}(T_{n,3}>c).
    \end{align}
    Since this holds for any $p\in\cH_{0m}$, taking the supremum over $p\in\cH_{0m}$ yields
    \begin{align}\label{eq:supupper}
        \sup_{p\in\cH_{0m}}\Pro_p(T_{n,m}>c)\leq\sup_{q\in\cH_{03}}\Pro_q(T_{n,3}>c).
    \end{align}

    It remains to check $\sup_{p\in\cH_{0m}}\Pro_p(T_{n,m}>c)\geq\sup_{p\in\cH_{03}}\Pro_p(T_{n,3}>c)$. Fix any $q=(q_{12},q_{23},q_{13})\in\cH_{03}$ and let
    \begin{align}
        \tilde{q}_{ij}=
        \begin{cases}
            q_{ij}~&1\leq i<j\leq3\\
            1~&1\leq i<j\leq m~\text{and}~j\geq4.
        \end{cases}
    \end{align}
    Since $\tilde{q}=(\tilde{q}_{ij})_{1\leq i<j\leq m}\in\cH_{0m}^{((1,2,3))}\subset\cH_{0m}$, we have 
    \begin{align}
        \sup_{p\in\cH_{0m}}\Pro_p(T_{n,m}>c)
        &\geq\Pro_{\tilde{q}}(T_{n,m}>c)
        =\Pro_{\tilde{q}}\left(\min\left(T_n^{((1,2,3))},\min_{\tau\in \cT_m\setminus\{(1,2,3)\}}T_n^{(\tau)}\right)>c\right).
    \end{align}
    We bound $\min\left(T_n^{((1,2,3))},\min_{\tau\in \cT_m\setminus\{(1,2,3)\}}T_n^{(\tau)}\right)$. To this end, for any $\tau=(\tau_1,\tau_2,\tau_3)\in\cT_m$, let $\hat{r}^\tau=(\hat{r}_1^\tau,\hat{r}_2^\tau,\hat{r}_3^\tau)=(\hat{p}_{n,\tau_1\tau_2},\hat{p}_{n,\tau_2\tau_3},1-\hat{p}_{n,\tau_1\tau_3})$. Then
    \begin{align}
        T_n^{(\tau)}=\min(Q_n^{-(\tau)},Q_n^{+(\tau)}),
    \end{align}
    where
    \begin{align}
        &Q_n^{-(\tau)}=2n\sum_{i=1}^{3}\text{KL}\left(\hat{r}_{i}^\tau\;\middle\|\;\frac{1}{2}\right)\mathbf{1}\left\{\hat{r}_{i}^\tau<\frac{1}{2}\right\},\\
        &Q_n^{+(\tau)}=2n\sum_{i=1}^{3}\text{KL}\left(\hat{r}_{i}^\tau\;\middle\|\;\frac{1}{2}\right)\mathbf{1}\left\{\hat{r}_{i}^\tau>\frac{1}{2}\right\}~~\text{and}\\
        &\text{KL}(x\|1/2)=\begin{cases}
            x\log(2x)+(1-x)\log(2-2x)~&x\in(0,1)\\
            \log2~&x\in\{0,1\}.
        \end{cases}
    \end{align}
    First, we bound $T_n^{((1,2,3))}$. Among three components of $\hat{r}^{((1,2,3))}$, either at most one is strictly smaller than $1/2$, or at most one is strictly larger than $1/2$. Therefore, at least one of $Q_n^{-((1,2,3))}$ and $Q_n^{+((1,2,3))}$ contains at most one nonzero term. Thus,
    \begin{align}
        T_n^{((1,2,3))}
        \leq2n\sup_{x\in[0,1]}\text{KL}\left(x\;\middle\|\;\frac{1}{2}\right)
        =2n\log2.
    \end{align}
    Next, we bound $\min_{\tau\in \cT_m\setminus\{(1,2,3)\}}T_n^{(\tau)}$.
    \begin{itemize}
        \item If exactly two of $\tau_1,\tau_2,\tau_3$ belong to $\{1,2,3\}$, then, under $\tilde{q}$, $\hat{r}^\tau=(x,1,0)$ for some $x\in[0,1]$ a.s. If $x\leq1/2$, then $Q_n^{+(\tau)}=2n\log2\leq Q_n^{-(\tau)}$; if $x\geq1/2$, then $Q_n^{-(\tau)}=2n\log2\leq Q_n^{+(\tau)}$. Thus, $T_n^{(\tau)}=2n\log2$ a.s.
        \item If at most one of $\tau_1,\tau_2,\tau_3$ belongs to $\{1,2,3\}$, then, under $\tilde{q}$, $\hat{r}^\tau=(1,1,0)$ a.s. In this case, $Q_n^{-(\tau)}=2n\log2$ and $Q_n^{+(\tau)}=4n\log2$. Thus, $T_n^{(\tau)}=2n\log2$ a.s.
    \end{itemize} 
    Therefore, we have $\min\left(T_n^{((1,2,3))},\min_{\tau\in \cT_m\setminus\{(1,2,3)\}}T_n^{(\tau)}\right)=T_n^{((1,2,3))}$ a.s., which gives
    \begin{align}
        \Pro_{\tilde{q}}\left(\min\left(T_n^{((1,2,3))},\min_{\tau\in \cT_m\setminus\{(1,2,3)\}}T_n^{(\tau)}\right)>c\right)
        =\Pro_{\tilde{q}}(T_n^{((1,2,3))}>c).
    \end{align}
    Also, the distribution of $T_n^{((1,2,3))}$ under $\tilde{q}\in\cH_{0m}$ is the same as the distribution of $T_{n,3}$ under $q\in\cH_{03}$. This is because all terms corresponding to parameters other than $q$ are canceled from $T_n^{((1,2,3))}$. Consequently, for any $q\in\cH_{03}$, we have
    \begin{align}
         \sup_{p\in\cH_{0m}}\Pro_p(T_{n,m}>c)
         \geq\Pro_{\tilde{q}}(T_n^{((1,2,3))}>c)
         =\Pro_{q}(T_{n,3}>c).
    \end{align}
    Taking supremum over $q\in\cH_{03}$ yields
    \begin{align}\label{eq:suplower}
        \sup_{p\in\cH_{0m}}\Pro_p(T_{n,m}>c)\geq\sup_{q\in\cH_{03}}\Pro_q(T_{n,3}>c).
    \end{align}
    Combining \eqref{eq:supupper} and \eqref{eq:suplower} completes the proof.    
\end{proof}

\subsection{Proof of Theorem \ref{thm:type1}}\label{proofthm:type1}
\begin{proof}
    According to Theorem \ref{thm:m3reduction}, we have
    \begin{align}
        \sup_{p\in\cH_{0m_n}}\Pro_p(T_{n,m_n}>c)
        =\sup_{p\in\cH_{03}}\Pro_p(T_{n,3}>c).
    \end{align}
    Therefore, it is sufficient to show that the right-hand side converges to $(1/2)\Pro(\chi_1^2>c)+(1/4)\Pro(\chi_2^2>c)$.
    
    Let $r=(r_1,r_2,r_3)=(p_{12},p_{23},1-p_{13})$. For notational simplicity, we write $\Pro_r:=\Pro_{p(r)}$, where $p(r)=(r_1,r_2,1-r_3)$. Thus, $\Pro_r$ is simply $\Pro_p$ expressed in terms of the reparameterization $r$. Under this reparameterization, $\cH_{03}$ corresponds to $(C_+\cup C_-)\setminus\{r_0\}$, where $C_-=[0,1/2]^3$, $C_+=[1/2,1]^3$ and $r_0=(1/2,1/2,1/2)$. Therefore, we have
    \begin{align}
        &\sup_{p\in\cH_{03}}\Pro_p(T_{n,3}>c)
        =\max\left(\sup_{r\in C_+\setminus \{r_0\}}\Pro_r(T_{n,3}>c),~\sup_{r\in C_-\setminus \{r_0\}}\Pro_r(T_{n,3}>c)\right)\\
        &=\max\left(\sup_{r\in C_+}\Pro_r(T_{n,3}>c),~\sup_{r\in C_-}\Pro_r(T_{n,3}>c)\right)
        =\sup_{r\in C_+}\Pro_r(T_{n,3}>c),
    \end{align}
    where the second equality holds since $\Pro_r(T_{n,3}>c)$ is continuous in $r$ and $r_0$ is a limit point of both $C_+\setminus\{r_0\}$ and $C_-\setminus\{r_0\}$, and the last equality holds by symmetry under the transformation $r\mapsto1-r$, which interchanges $C_+$ and $C_-$ without changing the distribution of $T_{n,3}$.
    
    Moreover, for any $M>0$, define the local neighborhood of $(1/2,1/2,1/2)$ within $C_+$ and its complement by
    \begin{align}
        \cL_{n,M}=\left\{r\in C_+:\max_{i\in[3]}r_i\leq\frac{1}{2}+\frac{M}{2\sqrt{n}}\right\},~~
        \cN_{n,M}=\left\{r\in C_+:\max_{i\in[3]}r_i>\frac{1}{2}+\frac{M}{2\sqrt{n}}\right\}.
    \end{align}    
    Then, it holds that
    \begin{align}
        &\limsup_{n\to\infty}\sup_{r\in C_+}\Pro_r(T_{n,3}>c)\\
        &=\max\left(\limsup_{n\to\infty}\sup_{r\in \cN_{n,M}}\Pro_r(T_{n,3}>c),~\limsup_{n\to\infty}\sup_{r\in \cL_{n,M}}\Pro_r(T_{n,3}>c)\right)\\
        &\leq\frac{1}{2}\Pro(\chi_1^2>c)+\frac{1}{4}\Pro(\chi_2^2>c)+\frac{1}{M^2},
    \end{align}
    where the inequality holds by Lemmas \ref{theory:type1notlocal} and \ref{theory:type1local}. Since $M>0$ is arbitrary, taking $M\to\infty$ yields
    \begin{align}\label{eq:limsupTn}
        \limsup_{n\to\infty}\sup_{r\in C_+}\Pro_r(T_{n,3}>c)
        \leq\frac{1}{2}\Pro(\chi_1^2>c)+\frac{1}{4}\Pro(\chi_2^2>c).
    \end{align}
    On the other hand, by Lemma \ref{theory:lowerbound}, it also holds that
    \begin{align}\label{eq:liminfTn}
        \liminf_{n\to\infty}\sup_{r\in C_+}\Pro_r(T_{n,3}>c)
        \geq\frac{1}{2}\Pro(\chi_1^2>c)+\frac{1}{4}\Pro(\chi_2^2>c).
    \end{align}
    Combining \eqref{eq:limsupTn} and \eqref{eq:liminfTn} completes the proof.
\end{proof}

\subsection{Proof of Theorem \ref{thm:power}}\label{proofthm:power}
\begin{proof}
    For any $x,y\in[0,1]$, let $\text{KL}\left(x\;\middle\|\;y\right)$ denote the Bernoulli KL divergence with the usual convention at the boundary:
    \begin{align}\label{eq:KL}
        \text{KL}\left(x\;\middle\|\;y\right)=\begin{cases}
            0&x=y\\
            -\log(1-y)&x=0,~0<y<1\\
            -\log y&x=1,~0<y<1\\
            +\infty&x>0~\text{and}~y=0,~\text{or}~x<1~\text{and}~y=1\\
            x\log\cfrac{x}{y}+(1-x)\log\cfrac{1-x}{1-y}&\text{otherwise}.
        \end{cases}
    \end{align}
    Using this, we have the following bound.
    \begin{align}
        T_{n,m_n}
        &=2n\inf_{q\in\cH_{0m_n}}\sum_{i<j}\text{KL}\left(\hat{p}_{n,ij}\;\middle\|\;q_{ij}\right)
        \geq2n\inf_{q\in\cH_{0m_n}}\max_{i<j}\text{KL}\left(\hat{p}_{n,ij}\;\middle\|\;q_{ij}\right)\\
        &\geq4n\inf_{q\in\cH_{0m_n}}\max_{i<j}(\hat{p}_{n,ij}-q_{ij})^2
        =4n\left(\inf_{q\in\cH_{0m_n}}\max_{i<j}|\hat{p}_{n,ij}-q_{ij}|\right)^2,
    \end{align}
    where the second inequality uses Pinsker's inequality ($\text{KL}\left(x\;\middle\|\;y\right)\geq2(x-y)^2$).

    Moreover, the triangle inequality yields
    \begin{align}
        \inf_{q\in\cH_{0m_n}}\max_{i<j}|\hat{p}_{n,ij}-q_{ij}|
        \geq\inf_{q\in\cH_{0m_n}}\max_{i<j}|p_{ij}-q_{ij}|-\max_{i<j}|p_{ij}-\hat{p}_{n,ij}|.
    \end{align}
    
    To further bound this right-hand side, for any $n$ and any $p$, let $E_n=\{\max_{i<j}|p_{ij}-\hat{p}_{n,ij}|\leq\Delta_n/2\}$. If the event $E_n$ occurs and $p\in\cH_{1m_n}(\Delta_n)$ holds, then we have
    \begin{align}
        \inf_{q\in\cH_{0m_n}}\max_{i<j}|p_{ij}-q_{ij}|-\max_{i<j}|p_{ij}-\hat{p}_{n,ij}|\geq\Delta_n-\frac{\Delta_n}{2}=\frac{\Delta_n}{2}>0
    \end{align}
    and therefore, we also have
    \begin{align}
        4n\left(\inf_{q\in\cH_{0m_n}}\max_{i<j}|\hat{p}_{n,ij}-q_{ij}|\right)^2
        \geq4n\left(\frac{\Delta_n}{2}\right)^2=n\Delta_n^2.
    \end{align}
    Using this,
    \begin{align}
        &\sup_{p\in\cH_{1m_n}(\Delta_n)}\Pro_{p}(T_{n,m_n}\leq c)\\
        &\leq\sup_{p\in\cH_{1m_n}(\Delta_n)}\Pro_{p}(T_{n,m_n}\leq c,E_n)
        +\sup_{p\in\cH_{1m_n}(\Delta_n)}\Pro_{p}(E_n^c)\\
        &\leq\sup_{p\in\cH_{1m_n}(\Delta_n)}\Pro_{p}\left(n\Delta_n^2\leq c\right)+\sup_{p\in\cH_{1m_n}(\Delta_n)}\Pro_{p}(E_n^c)\\
        &=\mathbf{1}\left\{n\Delta_n^2\leq c\right\}+\sup_{p\in\cH_{1m_n}(\Delta_n)}\Pro_{p}(E_n^c).
    \end{align}
    Since Condition \ref{con:signal} implies $n\Delta_n^2\to\infty$, we have $\mathbf{1}\{n\Delta_n^2\leq c\}=o(1)$.

    On the other hand, it holds that
    \begin{align}
        \Pro_{p}(E_n^c)
        &\leq\sum_{i<j}\Pro_{p}\left(|\hat{p}_{n,ij}-p_{ij}|>\frac{\Delta_n}{2}\right)\\
        &\leq2\binom{m_n}{2}\exp\left\{-\frac{n\Delta_n^2}{2}\right\}
        \leq m_n^2\exp\left\{-\frac{n\Delta_n^2}{2}\right\},
    \end{align}
    where the first inequality is the union bound, the second inequality is Hoeffding's inequality, and the last inequality uses $2\binom{m_n}{2}=m_n(m_n-1)\leq m_n^2$. Since this right-hand side does not depend on $p$, taking supremum yields
    \begin{align}
        \sup_{p\in\cH_{1m_n}(\Delta_n)}\Pro_{p}(E_n^c)\leq m_n^2\exp\left\{-\frac{n\Delta_n^2}{2}\right\}
        =\exp\left\{2\log m_n-\frac{n\Delta_n^2}{2}\right\}
        =o(1),
    \end{align}
    where the last equality follows from Condition \ref{con:signal}. This completes the proof.
\end{proof}

\section{Lemmas and their Proofs}

\begin{lem}\label{theory:type1notlocal}
For any $M>0$ and $c>0$,
    \begin{align}
        \limsup_{n\to\infty}\sup_{r\in \cN_{n,M}}\Pro_r(T_{n,3}>c)
        \leq\frac{1}{2}\Pro(\chi_1^2>c)+\frac{1}{4}\Pro(\chi_2^2>c)+\frac{1}{M^2}.
    \end{align}
\end{lem}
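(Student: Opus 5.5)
The plan is to bound $T_{n,3}$ from above by a statistic that involves only two coordinates. Recall that
\[
T_{n,3}=\min\Bigl(\textstyle\sum_{l=1}^3 D_n^-(\hat r^{(l)}),\ \sum_{l=1}^3 D_n^+(\hat r^{(l)})\Bigr)\le \sum_{l=1}^3 D_n^-(\hat r^{(l)}),
\]
where $\hat r=(\hat p_{n,12},\hat p_{n,23},1-\hat p_{n,13})$. The coordinates $\hat r^{(1)},\hat r^{(2)},\hat r^{(3)}$ are independent, and $n\hat r^{(l)}\sim\text{Bin}(n,r_l)$.

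Fix $r\in\cN_{n,M}$ and pick an index $i$ with $r_i>1/2+M/(2\sqrt n)$; let $j,k$ be the other two indices. On the event $\{\hat r^{(i)}\ge 1/2\}$ we have $D_n^-(\hat r^{(i)})=0$, so $T_{n,3}\le D_n^-(\hat r^{(j)})+D_n^-(\hat r^{(k)})$. This gives
\[
\Pro_r(T_{n,3}>c)\le \Pro_r\bigl(\hat r^{(i)}<1/2\bigr)+\Pro_r\bigl(D_n^-(\hat r^{(j)})+D_n^-(\hat r^{(k)})>c\bigr).
\]
For the first term, Chebyshev's inequality gives
\[
\Pro_r(\hat r^{(i)}<1/2)\le \frac{r_i(1-r_i)/n}{(r_i-1/2)^2}\le \frac{1/(4n)}{M^2/(4n)}=\frac1{M^2},
\]
uniformly in $r$.

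For the second term, I would use monotonicity. The map $x\mapsto D_n^-(x)$ is nonincreasing on $[0,1]$: it is zero for $x\ge 1/2$, and $\text{KL}(x\|1/2)$ decreases on $[0,1/2]$, including the boundary convention at $0$. Also $\hat r^{(j)}$ and $\hat r^{(k)}$ are independent and stochastically increasing in $r_j$ and $r_k$, respectively. Hence the event $\{D_n^-(\hat r^{(j)})+D_n^-(\hat r^{(k)})>c\}$ is a decreasing event in $(\hat r^{(j)},\hat r^{(k)})$. Its probability over $r_j,r_k\ge 1/2$ is therefore maximized at $r_j=r_k=1/2$, for instance by a coupling argument. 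The resulting bound $P_n:=\Pro_{1/2,1/2}\bigl(D_n^-(\hat r^{(j)})+D_n^-(\hat r^{(k)})>c\bigr)$ depends neither on $r$ nor on which $i$ was chosen. So
\[
\sup_{r\in\cN_{n,M}}\Pro_r(T_{n,3}>c)\le 1/M^2+P_n .
\]

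It remains to show $P_n\to\frac12\Pro(\chi_1^2>c)+\frac14\Pro(\chi_2^2>c)$. Set $Z_{n,l}=2\sqrt n(\hat r^{(l)}-1/2)$ for $l\in\{j,k\}$. Under $r_l=1/2$ the CLT gives $Z_{n,l}\Rightarrow N(0,1)$, independently across the two coordinates. A second-order Taylor expansion gives $2n\text{KL}(x\|1/2)=4n(x-1/2)^2(1+O(|x-1/2|))$. Since $\hat r^{(l)}-1/2=O_p(n^{-1/2})$, this yields $D_n^-(\hat r^{(l)})=Z_{n,l}^2\mathbf 1\{Z_{n,l}<0\}+o_p(1)$. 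By the continuous mapping theorem, the sum converges in distribution to $Z_1^2\mathbf 1\{Z_1<0\}+Z_2^2\mathbf 1\{Z_2<0\}$, which has law $\frac14\chi_0^2+\frac12\chi_1^2+\frac14\chi_2^2$. Because $c>0$, $c$ is a continuity point of this limit, so $P_n$ converges to the stated value. Taking $\limsup_{n\to\infty}$ then gives the lemma.

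The main obstacle I expect is making the monotonicity/coupling step rigorous. The reduction to $r_j=r_k=1/2$ must hold uniformly in $n$ and with the boundary convention for $\text{KL}$ at $0$ and $1$. I would handle it by the explicit coupling $\hat r^{(l)}=n^{-1}\sum_t\mathbf 1\{U_{t,l}\le r_l\}$ with i.i.d. uniforms $U_{t,l}$, under which $\hat r^{(l)}$ is pointwise nondecreasing in $r_l$.
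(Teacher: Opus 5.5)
Your proposal is correct and follows the paper's proof step by step. You bound $T_{n,3}$ by $Q_n^-$, split on the sign of the coordinate with $r_i>1/2+M/(2\sqrt{n})$, and use Chebyshev to get $1/M^2$; you then use stochastic monotonicity to reduce the other two coordinates to $r=1/2$, and the CLT with a Taylor expansion gives the chi-bar-square limit. The only cosmetic difference is that you use a joint uniform coupling, whereas the paper applies a one-coordinate binomial dominance lemma (Lemma \ref{theory:dominance}) twice in succession, using independence.
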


\begin{proof}\label{prooftheory:type1notlocal}
    We begin with rewriting $T_{n,3}$. Let $\hat{r}_n=(\hat{r}_{1n},\hat{r}_{2n},\hat{r}_{3n})=(\hat{p}_{n,12},\hat{p}_{n,23},1-\hat{p}_{n,13})$. Also, for any $r=(r_1,r_2,r_3)\in[0,1]^3$, let
    \begin{align}
        \tilde{L}_n(r)=\prod_{i=1}^{3}r_i^{n\hat{r}_{in}}(1-r_i)^{n-n\hat{r}_{in}}.
    \end{align}
    Then, letting $r_0=(1/2,1/2,1/2)$, we have
    \begin{align}
        T_{n,3}&=2\min\left(\log\frac{\sup_{r\in[0,1]^3}\tilde{L}_n(r)}{\sup_{r\in C_+\setminus\{r_0\}}\tilde{L}_n(r)},~
        \log\frac{\sup_{r\in[0,1]^3}\tilde{L}_n(r)}{\sup_{r\in C_-\setminus\{r_0\}}\tilde{L}_n(r)}\right)\\
        &=2\min\left(\log\frac{\sup_{r\in[0,1]^3}\tilde{L}_n(r)}{\sup_{r\in C_+}\tilde{L}_n(r)},~
        \log\frac{\sup_{r\in[0,1]^3}\tilde{L}_n(r)}{\sup_{r\in C_-}\tilde{L}_n(r)}\right),
    \end{align}
    where the second equality holds since $\tilde{L}_n(r)$ is continuous in $r$ and $r_0$ is a limit point of both $C_+\setminus\{r_0\}$ and $C_-\setminus\{r_0\}$. Moreover,    
    \begin{align}
        &2\min\left(\log\frac{\sup_{r\in[0,1]^3}\tilde{L}_n(r)}{\sup_{r\in C_+}\tilde{L}_n(r)},~
        \log\frac{\sup_{r\in[0,1]^3}\tilde{L}_n(r)}{\sup_{r\in C_-}\tilde{L}_n(r)}\right)\\
        &=2n\min\left(\sum_{i=1}^{3}\text{KL}\left(\hat{r}_{in}\;\middle\|\;\frac{1}{2}\right)\mathbf{1}\left\{\hat{r}_{in}<\frac{1}{2}\right\},~\sum_{i=1}^{3}\text{KL}\left(\hat{r}_{in}\;\middle\|\;\frac{1}{2}\right)\mathbf{1}\left\{\hat{r}_{in}>\frac{1}{2}\right\}\right)\\
        &=\min(Q_n^-(\hat{r}_n),Q_n^+(\hat{r}_n)),
    \end{align}
    where
    \begin{align}
        &Q_n^-(\hat{r}_n)=2n\sum_{i=1}^{3}\text{KL}\left(\hat{r}_{in}\;\middle\|\;\frac{1}{2}\right)\mathbf{1}\left\{\hat{r}_{in}<\frac{1}{2}\right\},\\
        &Q_n^+(\hat{r}_n)=2n\sum_{i=1}^{3}\text{KL}\left(\hat{r}_{in}\;\middle\|\;\frac{1}{2}\right)\mathbf{1}\left\{\hat{r}_{in}>\frac{1}{2}\right\}.
    \end{align}
    Therefore, we have the following bound.
    \begin{align}
        &\Pro_r(T_{n,3}>c)
        =\Pro_r(Q_n^-(\hat{r}_n)>c,Q_n^+(\hat{r}_n)>c)
        \leq\Pro_r(Q_n^-(\hat{r}_n)>c)\\
        &=\Pro_r(Q_n^-(\hat{r}_n)>c,\hat{r}_{3n}<1/2)+\Pro_r(Q_n^-(\hat{r}_n)>c,\hat{r}_{3n}\geq1/2)\\
        &\leq\Pro_{r_3}\left(\hat{r}_{3n}<\frac{1}{2}\right)+\Pro_{r_1,r_2}\left(\sum_{i=1}^{2}2n\text{KL}\left(\hat{r}_{in}\;\middle\|\;\frac{1}{2}\right)\mathbf{1}\left\{\hat{r}_{in}<\frac{1}{2}\right\}>c\right).\label{eq:nonlocal}
    \end{align}

    If $r\in\cN_{n,M}$, there exists some $j\in[3]$ such that $r_j>1/2+M/(2\sqrt{n})$. By permutation symmetry, without loss of generality, we may assume that $r_3>1/2+M/(2\sqrt{n})$ throughout the proof.

    First, we bound the first term of \eqref{eq:nonlocal}. It holds that
    \begin{align}
        \Pro_{r_3}\left(\hat{r}_{3n}<\frac{1}{2}\right)
        \leq\Pro_{r_3}\left(|\hat{r}_{3n}-r_3|>r_3-\frac{1}{2}\right)
        \leq\frac{\text{Var}(\hat{r}_{3n})}{(r_3-1/2)^2}
        \leq\frac{1}{M^2},\label{eq:nonlocalfirst}
    \end{align}
    where the second inequality is Chebyshev's inequality, and the last inequality follows since $\text{Var}(\hat{r}_{3n})=r_3(1-r_3)/n\leq1/(4n)$ and $r_3-1/2>M/(2\sqrt{n})$.

    Next, we bound the second term of \eqref{eq:nonlocal}. Denote $D_n^-(x)=2n\text{KL}\left(x\;\middle\|\;1/2\right)\mathbf{1}\{x<1/2\}$. Since $\hat{r}_{1n}$ and $\hat{r}_{2n}$ are independent, and $r\in\cN_{n,M}\subset C_+$ implies $r_1,r_2\geq1/2$, applying Lemma \ref{theory:dominance} successively yields
    \begin{align}
        &\Pro_{r_1,r_2}\left(\sum_{i=1}^{2}2n\text{KL}\left(\hat{r}_{in}\;\middle\|\;\frac{1}{2}\right)\mathbf{1}\left\{\hat{r}_{in}<\frac{1}{2}\right\}>c\right)
        =\Pro_{r_1,r_2}(D_n^-(\hat{r}_{1n})+D_n^-(\hat{r}_{2n})>c)\\
        &\leq\Pro_{1/2,r_2}(D_n^-(\hat{r}_{1n})+D_n^-(\hat{r}_{2n})>c)
        \leq\Pro_{1/2,1/2}(D_n^-(\hat{r}_{1n})+D_n^-(\hat{r}_{2n})>c).\label{eq:nonlocalsecond}
    \end{align}
    Combining \eqref{eq:nonlocal}, \eqref{eq:nonlocalfirst} and \eqref{eq:nonlocalsecond}, we have
    \begin{align}
        \Pro_r(T_{n,3}>c)\leq\frac{1}{M^2}+\Pro_{1/2,1/2}(D_n^-(\hat{r}_{1n})+D_n^-(\hat{r}_{2n})>c).
    \end{align}
    Since the right-hand side does not depend on $r$, it holds that
    \begin{align}
        \sup_{r\in\cN_{n,M}}\Pro_r(T_{n,3}>c)\leq\frac{1}{M^2}+\Pro_{1/2,1/2}(D_n^-(\hat{r}_{1n})+D_n^-(\hat{r}_{2n})>c).
    \end{align}
    It remains to evaluate $\limsup_{n\to\infty}\Pro_{1/2,1/2}(D_n^-(\hat{r}_{1n})+D_n^-(\hat{r}_{2n})>c)$. For each $i=1,2$, under $r_i=1/2$, the central limit theorem implies $2\sqrt{n}(\hat{r}_{in}-1/2)\xrightarrow{d}N(0,1)$. Furthermore, since $D_n^-(\hat{r}_{in})=(2\sqrt{n}(\hat{r}_{in}-1/2))^2\mathbf{1}\{\hat{r}_{in}<1/2\}+o_p(1)$, the continuous mapping theorem and Slutsky's lemma imply that $D_n^-(\hat{r}_{in})\xrightarrow{d}Z_i^2\mathbf{1}\{Z_i<0\}$, where $Z_i\sim N(0,1)$. Moreover, since $D_n^-(\hat{r}_{1n})$ and $D_n^-(\hat{r}_{2n})$ are independent, we have $D_n^-(\hat{r}_{1n})+D_n^-(\hat{r}_{2n})\xrightarrow{d}Z_1^2\mathbf{1}\{Z_1<0\}+Z_2^2\mathbf{1}\{Z_2<0\}$, where $Z_1,Z_2\sim N(0,1)$ independent. Since $c>0$ is a continuity point of the limiting distribution, we have
    \begin{align}
        \limsup_{n\to\infty}\sup_{r\in\cN_{n,M}}\Pro_r(T_{n,3}>c)&\leq\frac{1}{M^2}+\Pro(Z_1^2\mathbf{1}\{Z_1<0\}+Z_2^2\mathbf{1}\{Z_2<0\}>c)\\
        &=\frac{1}{M^2}+\frac{1}{2}\Pro(\chi_1^2>c)+\frac{1}{4}\Pro(\chi_2^2>c).
    \end{align}
    This completes the proof.
\end{proof}

\begin{lem}\label{theory:dominance}
    Suppose that $r_i\geq1/2$. For any $t\in\bbR$, we have
    \begin{align}
        \Pro_{r_i}(D_n^-(\hat{r}_{in})>t)\leq\Pro_{1/2}(D_n^-(\hat{r}_{in})>t).
    \end{align}
\end{lem}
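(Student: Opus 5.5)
The plan is to show that $D_n^-(\hat{r}_{in})>t$ is a lower-tail event for $\hat{r}_{in}$, and then use stochastic monotonicity of the binomial family in its success probability. The case $t<0$ is trivial, since both probabilities equal $1$ because $D_n^-\geq 0$. So fix $t\geq 0$.

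\textbf{Step 1: rewrite the event.} The map $x\mapsto 2n\,\text{KL}(x\|1/2)$ is continuous on $[0,1]$, including the convention $\log 2$ at $x\in\{0,1\}$, which matches the limit. It is also strictly decreasing on $[0,1/2]$, since its derivative $2n\log\bigl(x/(1-x)\bigr)$ is negative for $x<1/2$. Consequently $D_n^-(x)$ is nonincreasing in $x$ on all of $[0,1]$: it is decreasing on $[0,1/2)$ and vanishes on $[1/2,1]$. For $t\geq0$, the set $\{x\in[0,1]:D_n^-(x)>t\}$ is therefore a lower interval of the form $[0,a_t)$ or $[0,a_t]$ with $a_t\leq 1/2$; it is possibly empty.

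\textbf{Step 2: apply stochastic monotonicity.} Write $\{D_n^-(\hat{r}_{in})>t\}=\{n\hat{r}_{in}\leq k_t\}$ for some integer threshold $k_t$, with $k_t=-1$ allowed for the empty case. Here $n\hat{r}_{in}\sim\text{Bin}(n,r_i)$ under $\Pro_{r_i}$. Note that when $r_i=1-p_{13}$ this still holds, because $n(1-\hat{p}_{n,13})\sim\text{Bin}(n,1-p_{13})$. The binomial distribution is stochastically increasing in its success probability. One way to see this is by coupling: take $U_1,\dots,U_n$ i.i.d.\ uniform and set $\mathbf{1}\{U_s\leq r\}$, which is pointwise nondecreasing in $r$. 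It follows that $\Pro_{r_i}(n\hat{r}_{in}\leq k)$ is nonincreasing in $r_i$. Since $r_i\geq 1/2$, this gives $\Pro_{r_i}(n\hat{r}_{in}\leq k_t)\leq \Pro_{1/2}(n\hat{r}_{in}\leq k_t)$, which is the claim.

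The only point needing care is Step 1 at the boundary values $x\in\{0,1\}$. There I will check that the convention $\text{KL}(0\|1/2)=\log2$ agrees with the continuous extension, so that the monotonicity is not broken at $0$; the value at $x=1$ is irrelevant because of the indicator. Beyond that, the argument is routine.
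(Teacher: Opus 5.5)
Your proposal is correct and follows essentially the same route as the paper. Both arguments observe that $D_n^-$ is continuous and nonincreasing on $[0,1]$, so $\{D_n^-(\hat{r}_{in})>t\}$ is a lower-tail event for $\hat{r}_{in}$, and then apply the stochastic monotonicity of $\text{Bin}(n,r_i)$ in $r_i$ to compare with $r_i=1/2$; your derivative computation, coupling argument, and separate handling of $t<0$ simply spell out steps the paper states in one line.
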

\begin{proof}
    Recall that $D_n^-(x)=2n\text{KL}\left(x\;\middle\|\;1/2\right)\mathbf{1}\{x<1/2\}$, which is continuous and non-increasing in $x\in[0,1]$. Hence, for any $t\in\bbR$, there exists some $c_t\in\bbR$ such that $\{D_n^-(x)>t\}=\{x<c_t\}$. Using this, we have
    \begin{align}
        \Pro_{r_i}(D_n^-(\hat{r}_{in})>t)
        =\Pro_{r_i}(\hat{r}_{in}<c_t)
        \leq\Pro_{1/2}(\hat{r}_{in}<c_t)
        =\Pro_{1/2}(D_n^-(\hat{r}_{in})>t),
    \end{align}
    where the inequality uses the stochastic dominance of binomial distributions. This completes the proof.
\end{proof}

\begin{lem}\label{theory:type1local}
For any $M>0$,
    \begin{align}
        \limsup_{n\to\infty}\sup_{r\in \cL_{n,M}}\Pro_r(T_{n,3}>c)
        \leq\frac{1}{2}\Pro(\chi_1^2>c)+\frac{1}{4}\Pro(\chi_2^2>c).
    \end{align}
\end{lem}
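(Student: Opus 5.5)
The plan is to pass to the Gaussian limit uniformly over the shrinking neighbourhood $\cL_{n,M}$, and then prove a finite-dimensional Gaussian inequality over a compact set of local shifts. For $r\in\cL_{n,M}$ I write $r_i=1/2+h_i/(2\sqrt{n})$ with $h=(h_1,h_2,h_3)\in[0,M]^3$, and set $Z_{in}=2\sqrt{n}(\hat{r}_{in}-1/2)$. By the representation $T_{n,3}=\min(Q_n^-(\hat{r}_n),Q_n^+(\hat{r}_n))$ from the proof of Lemma \ref{theory:type1notlocal}, together with the Taylor expansion $2n\text{KL}(x\|1/2)=(2\sqrt{n}(x-1/2))^2(1+o(1))$ near $x=1/2$, I expect
\begin{align}
T_{n,3}=\min\Bigl(\sum_{i=1}^3 Z_{in}^2\mathbf{1}\{Z_{in}<0\},\ \sum_{i=1}^3 Z_{in}^2\mathbf{1}\{Z_{in}>0\}\Bigr)+o_p(1)
\end{align}
uniformly over $\cL_{n,M}$. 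This holds because $\hat{r}_{in}-1/2=O_p(n^{-1/2})$ uniformly when $r_i-1/2\le M/(2\sqrt{n})$. Call the leading term $F(Z_n)$; the map $F$ is continuous on $\bbR^3$.

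For uniformity I will use a subsequence argument rather than explicit Berry--Esseen bounds. I pick $r^{(n)}\in\cL_{n,M}$ with $\Pro_{r^{(n)}}(T_{n,3}>c)$ within $1/n$ of the supremum, pass to a subsequence along which the $\limsup$ is attained, and then to a further subsequence along which $h^{(n)}\to h\in[0,M]^3$ (compactness). Along it, the Lindeberg CLT applies since the variances $r_i(1-r_i)\to1/4$, and gives $Z_n\xrightarrow{d}Z\sim N(h,I_3)$ with independent coordinates. Then $T_{n,3}\xrightarrow{d}F(Z)$ by Slutsky and continuous mapping. Since $\{F=c\}$ is a Lebesgue-null set for $c>0$, I get $\limsup\le f(h):=\Pro(F(Z)>c)$. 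The lemma thus reduces to the inequality
\begin{align}
\sup_{h\in[0,\infty)^3} f(h)\le \tfrac12\Pro(\chi_1^2>c)+\tfrac14\Pro(\chi_2^2>c),
\end{align}
whose right-hand side is the limit of $f(0,0,t)$ as $t\to\infty$.

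The main obstacle is this Gaussian inequality, because $f$ is not monotone in $h$. Increasing $h_i$ shrinks the negative part but inflates the positive part. My plan is to condition on the signs. On $\{F(Z)>c\}$ the coordinates cannot all share a sign, so exactly one coordinate, say $Z_k$, lies on the minority side and the other two on the majority side. Then $F(Z)=\min\bigl(Z_k^2,\sum_{i\ne k}Z_i^2\bigr)$.
\begin{itemize}
\item When the minority coordinate is negative, $F(Z)\le Z_k^2\mathbf{1}\{Z_k<0\}$. Its tail is nonincreasing in $h_k$ by the Gaussian analogue of Lemma \ref{theory:dominance}.
\item When the minority coordinate is positive, the two majority coordinates are negative, so $F(Z)\le \sum_{i\ne k}Z_i^2\mathbf{1}\{Z_i<0\}$. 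Again each term is stochastically decreasing in the corresponding $h_i$.
\end{itemize}
I would then bound $f(h)$ by a sum over $k$ of these pieces and try to show that the sum is maximized as one shift tends to infinity and the other two vanish. My first attempt is a one-coordinate-at-a-time argument: fix two shifts and check that $f$ is quasi-convex in the third, so the supremum is attained at $h_i\in\{0,\infty\}$. Then I compare the finitely many corner values directly:
\begin{itemize}
\item at $h=0$, the value is $\Pro(F(Z)>c)$ for centered $Z$;
\item with one infinite shift, it is the target value;
\item with two or three infinite shifts, it is $\tfrac12\Pro(\chi_1^2>c)$ or $0$, respectively.
\end{itemize}
The delicate comparison is the corner $h=0$ against the target. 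There I would compute $\Pro(F(Z)>c)$ explicitly using $\Pro(\text{one minority}, \text{sign fixed})=3/8$ and independence of $|Z_i|$ from the signs, and verify analytically that it is at most the target for all $c>0$.
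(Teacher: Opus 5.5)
Your reduction to a Gaussian problem is sound, and it differs from the paper's. Since $\cL_{n,M}$ corresponds to the fixed compact set $h\in[0,M]^3$, your near-maximizer/subsequence argument is enough: together with the triangular-array CLT and the fact that $\{F=c\}$ is Lebesgue-null, it replaces the smoothing by $\psi_\rho$ and the coordinatewise Wasserstein-CLT replacement of Lemmas \ref{theory:psiconverge}--\ref{theory:mlip}. The cost is that you get no rate. Your corner values are also correct, and the corner you call delicate is immediate. With $a=\Pro(\chi_1^2>c)$ and $b=\Pro(\chi_2^2>c)$ you get $f(0)=\frac34ab$, and $\frac12a+\frac14b-\frac34ab=\frac12a(1-b)+\frac14b(1-a)\ge0$.

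The gap is the Gaussian inequality $\sup_h f(h)\le\frac12a+\frac14b$, which is the real content of the lemma, and your proposal does not prove it. The sign-conditioning decomposition cannot deliver it. Keeping the sign events, your sum of pieces equals $\frac38(a+b)$ at $h=0$, which exceeds $\frac12a+\frac14b$ because $b>a$ for every $c>0$. Each piece is monotone only in $h_k$, while the sign events on the two majority coordinates move the opposite way in $h_i,h_j$. So everything rests on coordinatewise quasi-convexity of $f$, which you state as something to check but give no reason for. This is not automatic, since $h_1\mapsto f(h)$ is the sum of an increasing and a decreasing part, and such sums need not be quasi-convex.

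The claim is true, but it needs an argument such as the following. Write $Y_i=h_i+Z_i$, $y^{+}=\max(y,0)$, $y^{-}=\max(-y,0)$, $a^{\pm}=(Y_2^{\pm})^2+(Y_3^{\pm})^2$, and let $\phi$ be the standard normal density. Conditionally on $(Y_2,Y_3)$, the set $\{y_1:F(y_1,Y_2,Y_3)>c\}$ is $\bbR$, $(\sqrt{c-a^{+}},\infty)$, $(-\infty,-\sqrt{c-a^{-}})$, or empty. Hence $\partial f/\partial h_1=\int\phi(h_1-u)\,\mu(du)-\int\phi(h_1-w)\,\nu(dw)$, with $\mu$ carried by $[0,\sqrt c]$ and $\nu$ by $[-\sqrt c,0]$.

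Since $\phi(h'-u)\phi(h-w)\ge\phi(h-u)\phi(h'-w)$ whenever $h'\ge h$ and $u\ge w$, the ratio of the two mixtures is nondecreasing in $h_1$. So the derivative changes sign at most once, from negative to positive. The limits at infinity are monotone, e.g. $\lim_{h_1\to\infty}f(h)=\Pro((Y_2^{-})^2+(Y_3^{-})^2>c)$, which is nonincreasing in $h_2,h_3$. With these two facts, reducing to the corners one coordinate at a time is justified, and your corner comparison finishes the proof.

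The paper avoids quasi-convexity altogether. Lemma \ref{theory:schur} rotates to $U=(Y_1+Y_2)/2$ and $V=(Y_1-Y_2)/2$, uses that $g$ is monotone in $|V|$, and applies Lemma \ref{theory:normalincreasing}. This shows that merging all the shift mass into one coordinate can only increase $\Pro(g(b+Z)>t)$. Conditioning on the sign of $b_1+b_2+b_3+Z_1$ then gives exactly $\frac12a+\frac14b$ for every $b\in[0,\infty)^3$ (Lemma \ref{theory:gaussian}).
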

\begin{proof}
   For any $w=(w_1,w_2,w_3)\in\bbR^3$, let $g(w)=\min(\sum_{i=1}^{3}w_i^2\mathbf{1}\{w_i<0\},\sum_{i=1}^{3}w_i^2\mathbf{1}\{w_i>0\})$. Also, let $W_n=(W_{1n},W_{2n},W_{3n})$ where $W_{in}=2\sqrt{n}(\hat{r}_{in}-1/2)$. Then, for any $0<\rho<c/2$, we have
    \begin{align}
        \Pro_r(T_{n,3}>c)
        &\leq\Pro_r(T_{n,3}>c,|T_{n,3}-g(W_n)|\leq\rho)+\Pro_r(|T_{n,3}-g(W_n)|>\rho)\\
        &\leq\Pro_r(g(W_n)>c-\rho)+\Pro_r(|T_{n,3}-g(W_n)|>\rho).
    \end{align}
    Moreover, let
    \begin{align}
        \psi_\rho(x)=
        \begin{cases}
            0&x\leq c-2\rho\\
            \cfrac{x-(c-2\rho)}{\rho}&c-2\rho<x<c-\rho\\
            1&c-\rho\leq x.
        \end{cases}
    \end{align}
    Then we have $\mathbf{1}\{x>c-\rho\}\leq\psi_\rho(x)\leq\mathbf{1}\{x>c-2\rho\}$. Letting $\E_r$ be expectation under $\Pro_r$, we have
    \begin{align}
        \Pro_r(g(W_n)>c-\rho)
        =\E_r[\mathbf{1}\{g(W_n)>c-\rho\}]
        \leq\E_r[\psi_\rho(g(W_n))].
    \end{align}
    Letting $b_n=(b_{1n},b_{2n},b_{3n})$ where $b_{in}=2\sqrt{n}(r_i-1/2)$, and $Z\sim N(0,I_3)$, we can bound this as
    \begin{align}
        \E_r[\psi_\rho(g(W_n))]
        &\leq|\E_r[\psi_\rho(g(W_n))]-\E[\psi_\rho(g(b_n+Z))]|+\E[\psi_\rho(g(b_n+Z))]\\
        &\leq|\E_r[\psi_\rho(g(W_n))]-\E[\psi_\rho(g(b_n+Z))]|+\E[\mathbf{1}\{g(b_n+Z)>c-2\rho\}]\\
        &=|\E_r[\psi_\rho(g(W_n))]-\E[\psi_\rho(g(b_n+Z))]|+\Pro(g(b_n+Z)>c-2\rho).
    \end{align}
    Consequently, taking supremum yields
    \begin{align}
        &\sup_{r\in\cL_{n,M}}\Pro_r(T_{n,3}>c)\\
        &\leq\sup_{r\in\cL_{n,M}}|\E_r[\psi_\rho(g(W_n))]-\E[\psi_\rho(g(b_n+Z))]|\\
        &\quad\quad+\sup_{r\in\cL_{n,M}}\Pro(g(b_n+Z)>c-2\rho)
        +\sup_{r\in\cL_{n,M}}\Pro_r(|T_{n,3}-g(W_n)|>\rho).
    \end{align}
    Applying Lemma \ref{theory:psiconverge} to the first term, Lemma \ref{theory:gaussian} to the second term, and Lemma \ref{theory:tnconverge} to the last term yields
    \begin{align}
        \limsup_{n\to\infty}\sup_{r\in\cL_{n,M}}\Pro_r(T_{n,3}>c)\leq\frac{1}{2}\Pro(\chi_1^2>c-2\rho)+\frac{1}{4}\Pro(\chi_2^2>c-2\rho).
    \end{align}
    Since $0<\rho<c/2$ is arbitrary, and both $\chi_1^2$ and $\chi_2^2$ have continuous distributions, letting $\rho\to0$ completes the proof.
\end{proof}

\begin{lem}\label{theory:tnconverge}
    For any $M>0$ and $t>0$, it holds that
    \begin{align}
        \limsup_{n\to\infty}\sup_{r\in\cL_{n,M}}\Pro_r(|T_{n,3}-g(W_n)|>t)=0,
    \end{align}
\end{lem}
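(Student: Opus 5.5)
We will show that, uniformly over $r\in\cL_{n,M}$, the difference $T_{n,3}-g(W_n)$ is bounded by a quantity that is small on a high-probability event. Recall from the proof of Lemma \ref{theory:type1notlocal} that $T_{n,3}=\min(Q_n^-(\hat r_n),Q_n^+(\hat r_n))$. Since $|\min(a,b)-\min(a',b')|\le\max(|a-a'|,|b-b'|)$, it suffices to show that
\begin{align}
\sum_{i=1}^3\left|2n\,\text{KL}\left(\hat r_{in}\;\middle\|\;\tfrac12\right)-W_{in}^2\right|\to 0
\end{align}
in probability, uniformly over $\cL_{n,M}$. The indicators $\mathbf{1}\{\hat r_{in}<1/2\}$ and $\mathbf{1}\{W_{in}<0\}$ coincide exactly, so they cause no difficulty.

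The first step is a Taylor expansion of $h(x)=\text{KL}(x\|1/2)$ around $x=1/2$. We have $h(1/2)=h'(1/2)=0$, $h''(x)=1/(x(1-x))$ with $h''(1/2)=4$, and $h'''(x)=(2x-1)/(x^2(1-x)^2)$. Hence on $|x-1/2|\le 1/4$ there is a constant $K$ with
\begin{align}
|h(x)-2(x-1/2)^2|\le K|x-1/2|^3.
\end{align}
Multiplying by $2n$ gives
\begin{align}
|2nh(\hat r_{in})-W_{in}^2|\le 2nK|\hat r_{in}-1/2|^3=\frac{K}{4\sqrt n}|W_{in}|^3
\end{align}
whenever $|\hat r_{in}-1/2|\le 1/4$.

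The second step is uniform tightness of $W_n$ over $\cL_{n,M}$. For $r\in\cL_{n,M}\subset C_+$ we have $0\le b_{in}\le M$. Write $W_{in}=b_{in}+2\sqrt n(\hat r_{in}-r_i)$. Since $\text{Var}(\hat r_{in})\le 1/(4n)$, Chebyshev's inequality gives
\begin{align}
\Pro_r\left(|W_{in}|>M+A\right)\le \Pro_r\left(2\sqrt n|\hat r_{in}-r_i|>A\right)\le 1/A^2,
\end{align}
uniformly in $r$. On the event $\{\max_i|W_{in}|\le M+A\}$ and for $n$ large enough that $(M+A)/(2\sqrt n)\le 1/4$, we have $|T_{n,3}-g(W_n)|\le 3K(M+A)^3/(4\sqrt n)$, which is below $t$ for all large $n$. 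Therefore
\begin{align}
\limsup_{n\to\infty}\sup_{r\in\cL_{n,M}}\Pro_r(|T_{n,3}-g(W_n)|>t)\le 3/A^2,
\end{align}
and letting $A\to\infty$ gives the claim.

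The only delicate points are that the Taylor remainder must be controlled uniformly, which is why the argument restricts to a compact neighbourhood $|x-1/2|\le1/4$ away from the endpoints $0$ and $1$ where $h'''$ blows up, and that the tightness bound must not depend on $r$. Both are handled above: the first by the localisation to $|\hat r_{in}-1/2|\le 1/4$, the second by the fact that $b_n$ is bounded by $M$ on $\cL_{n,M}$ together with the variance bound $1/(4n)$, so Chebyshev's inequality applies with the same constant for every $r$. The boundary convention $\text{KL}(x\|1/2)=\log 2$ at $x\in\{0,1\}$ plays no role, since such values of $\hat r_{in}$ lie outside the high-probability event.
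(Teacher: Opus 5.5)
Your proof is correct and follows essentially the same route as the paper: Chebyshev gives uniform tightness of $W_n$ over $\cL_{n,M}$ (using $0\le b_{in}\le M$), the $\min$-Lipschitz bound reduces $|T_{n,3}-g(W_n)|$ to $\sum_{i}|2n\,\text{KL}(\hat r_{in}\|1/2)-W_{in}^2|$, and a local quadratic approximation of the KL divergence controls this on the high-probability event. The only difference is cosmetic: you use a third-order Taylor remainder, which gives an $O(n^{-1/2})$ error, whereas the paper uses the even power series of $(1+x)\log(1+x)+(1-x)\log(1-x)$ (Lemma \ref{theory:hn}) to obtain the sharper $4R^4/n$ bound; either suffices.
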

\begin{proof}
    For any $R>M$, we have
    \begin{align}
        &\Pro_r(|T_{n,3}-g(W_n)|>t)\\
        &\leq\Pro_r(\|W_n\|_\infty>R)+\Pro_r(|T_{n,3}-g(W_n)|>t,\|W_n\|_\infty\leq R).\label{eq:tn}
    \end{align}
    First, we bound the first term of the right-hand side of \eqref{eq:tn}. Recall that $W_{in}=2\sqrt{n}(\hat{r}_{in}-1/2)$. Therefore,
    \begin{align}
        \Pro_r(\|W_n\|_\infty>R)
        &\leq\sum_{i=1}^{3}\Pro_r\left(2\sqrt{n}\left|\hat{r}_{in}-\frac{1}{2}\right|>R\right)\\
        &\leq\sum_{i=1}^{3}\Pro_r\left(2\sqrt{n}\left(|\hat{r}_{in}-r_i|+\left|r_i-\frac{1}{2}\right|\right)>R\right).
    \end{align}
    If $r\in\cL_{n,M}$, we have $|r_i-1/2|\leq M/(2\sqrt{n})$ for each $i\in[3]$. Thus,
    \begin{align}
        \sup_{r\in\cL_{n,M}}\Pro_r(\|W_n\|_\infty>R)
        &\leq\sum_{i=1}^{3}\sup_{r\in\cL_{n,M}}\Pro_r\left(2\sqrt{n}|\hat{r}_{in}-r_i|+M>R\right)\\
        &\leq\sum_{i=1}^{3}\sup_{r\in\cL_{n,M}}\frac{4r_i(1-r_i)}{(R-M)^2}
        \leq\frac{3}{(R-M)^2},\label{eq:tn1}
    \end{align}
    where the second inequality is Chebyshev's inequality.

    Next, we bound the second term of the right-hand side of \eqref{eq:tn}. For any $n\in\bbN$ and $-\sqrt{n}\leq x\leq\sqrt{n}$, define
    \begin{align}
        H_n(x)=2n\text{KL}\left(\frac{1}{2}+\frac{x}{2\sqrt{n}}\;\middle\|\;\frac{1}{2}\right).
    \end{align}
    Then, we have
    \begin{align}
        Q_n^-(\hat{r}_n)=\sum_{i=1}^{3}2n\text{KL}\left(\hat{r}_{in}\;\middle\|\;\frac{1}{2}\right)\mathbf{1}\left\{\hat{r}_{in}<\frac{1}{2}\right\}=\sum_{i=1}^{3}H_n(W_{in})\mathbf{1}\{W_{in}<0\},\\
        Q_n^+(\hat{r}_n)=\sum_{i=1}^{3}2n\text{KL}\left(\hat{r}_{in}\;\middle\|\;\frac{1}{2}\right)\mathbf{1}\left\{\hat{r}_{in}>\frac{1}{2}\right\}=\sum_{i=1}^{3}H_n(W_{in})\mathbf{1}\{W_{in}>0\}.
    \end{align}
    Thus, we have $T_{n,3}=\min(\sum_{i=1}^{3}H_n(W_{in})\mathbf{1}\{W_{in}<0\},\sum_{i=1}^{3}H_n(W_{in})\mathbf{1}\{W_{in}>0\})$. Applying Lemma \ref{theory:minmax} to $T_{n,3}$ and $g(W_n)$ yields
    \begin{align}
        &|T_{n,3}-g(W_n)|\\
        &\leq\max\left(\left|\sum_{i=1}^{3}(H_n(W_{in})-W_{in}^2)\mathbf{1}\{W_{in}<0\}\right|,\left|\sum_{i=1}^{3}(H_n(W_{in})-W_{in}^2)\mathbf{1}\{W_{in}>0\}\right|\right)\\
        &\leq\sum_{i=1}^{3}|H_n(W_{in})-W_{in}^2|~~\text{a.s.}
    \end{align}
    For sufficiently large $n$ such that $n\geq2R^2$, on the event $\{\|W_n\|_\infty\leq R\}$, we have
    \begin{align}
        \sum_{i=1}^{3}|H_n(W_{in})-W_{in}^2|
        \leq3\sup_{-R\leq w\leq R}|H_n(w)-w^2|
        \leq\frac{12R^4}{n},
    \end{align}
    where the last inequality holds by Lemma \ref{theory:hn}.
    
    Consequently, we have
    \begin{align}
        \Pro_r(|T_{n,3}-g(W_n)|>t,\|W_n\|_\infty\leq R)
        \leq\mathbf{1}\left\{\frac{12R^4}{n}>t\right\}.\label{eq:tn2}
    \end{align}
    Combining \eqref{eq:tn}, \eqref{eq:tn1} and \eqref{eq:tn2} yields
    \begin{align}
        \sup_{r\in\cL_{n,M}}\Pro_r(|T_{n,3}-g(W_n)|>t)
        \leq\frac{3}{(R-M)^2}+\mathbf{1}\left\{\frac{12R^4}{n}>t\right\}.
    \end{align}
    Thus,
    \begin{align}
        \limsup_{n\to\infty}\sup_{r\in\cL_{n,M}}\Pro_r(|T_{n,3}-g(W_n)|>t)
        \leq\frac{3}{(R-M)^2}.
    \end{align}
    Since this inequality holds for any $R>M$, letting $R\to\infty$ completes the proof.
\end{proof}

\begin{lem}\label{theory:minmax}
    For any $a,b,c,d\in\bbR$, it holds that
    \begin{align}
        |\min(a,b)-\min(c,d)|\leq\max(|a-c|,|b-d|).
    \end{align}
\end{lem}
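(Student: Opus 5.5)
The statement is the elementary inequality $|\min(a,b)-\min(c,d)|\leq\max(|a-c|,|b-d|)$. I would prove it by reducing to a one-sided bound and using symmetry. Set $\delta=\max(|a-c|,|b-d|)$. The goal is to show the two inequalities
\begin{align}
\min(a,b)\leq\min(c,d)+\delta
\quad\text{and}\quad
\min(c,d)\leq\min(a,b)+\delta.
\end{align}
The second inequality is the first with the roles of $(a,b)$ and $(c,d)$ exchanged. That exchange leaves $\delta$ unchanged, so it suffices to prove the first.

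For the first inequality, use that $\min(a,b)\leq a$ and $\min(a,b)\leq b$. Since $a\leq c+|a-c|\leq c+\delta$ and $b\leq d+|b-d|\leq d+\delta$, we get that $\min(a,b)$ is at most both $c+\delta$ and $d+\delta$. Hence
\begin{align}
\min(a,b)\leq\min(c+\delta,d+\delta)=\min(c,d)+\delta.
\end{align}
Combining the two one-sided bounds gives $|\min(a,b)-\min(c,d)|\leq\delta$, which is the claim.

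There is no genuine obstacle; the only point needing care is the identity $\min(c+\delta,d+\delta)=\min(c,d)+\delta$, which holds because adding a common constant preserves order.

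Equivalently, the lemma says that the map $(x,y)\mapsto\min(x,y)$ is $1$-Lipschitz from $(\bbR^2,\|\cdot\|_\infty)$ to $\bbR$. That is the form in which it is used in the proof of Lemma \ref{theory:tnconverge}.
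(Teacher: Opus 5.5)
Your proof is correct and follows the paper's argument: the paper also sets $\delta=\max(|a-c|,|b-d|)$ and uses $\min(c\pm\delta,d\pm\delta)=\min(c,d)\pm\delta$. The only difference is cosmetic: the paper obtains both one-sided bounds at once from the sandwich $c-\delta\leq a\leq c+\delta$, $d-\delta\leq b\leq d+\delta$, whereas you prove the upper bound and get the lower bound by exchanging $(a,b)$ and $(c,d)$.
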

\begin{proof}
    Let $\delta=\max(|a-c|,|b-d|)$. Then $c-\delta\leq a\leq c+\delta$ and $d-\delta\leq b\leq d+\delta$. Using this, we have
    \begin{align}
        \min(c,d)-\delta
        =\min(c-\delta,d-\delta)
        \leq\min(a,b)
        \leq\min(c+\delta,d+\delta)
        =\min(c,d)+\delta.
    \end{align}
    Therefore,
    \begin{align}
        |\min(a,b)-\min(c,d)|\leq\delta=\max(|a-c|,|b-d|).
    \end{align}
    This completes the proof.
\end{proof}

\begin{lem}\label{theory:hn}
    For any $R>0$, if $n\geq2R^2$, it holds that
    \begin{align}
        \sup_{-R\leq w\leq R}|H_n(w)-w^2|\leq\frac{4R^4}{n}.
    \end{align}
\end{lem}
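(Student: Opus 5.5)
The plan is to reduce the bound to an explicit one-variable Taylor estimate. Setting $u=w/\sqrt{n}$, the condition $|w|\leq R$ with $n\geq2R^2$ forces $|u|\leq R/\sqrt{n}\leq1/\sqrt{2}$. Since $x=1/2+u/2$ lies in $(0,1)$ here, we can write
\begin{align}
    H_n(w)=2n\,\text{KL}\left(\tfrac{1}{2}+\tfrac{u}{2}\;\middle\|\;\tfrac{1}{2}\right)=n\bigl[(1+u)\log(1+u)+(1-u)\log(1-u)\bigr].
\end{align}
So everything comes down to comparing $\phi(u):=(1+u)\log(1+u)+(1-u)\log(1-u)$ with $u^2$.

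Next we expand $\phi$ as a power series, valid for $|u|<1$:
\begin{align}
    \phi(u)=\sum_{k\geq1}\frac{u^{2k}}{k(2k-1)}=u^2+\frac{u^4}{6}+\frac{u^6}{15}+\cdots .
\end{align}
All coefficients are nonnegative, so $\phi(u)-u^2\geq0$. For the upper bound, each coefficient with $k\geq2$ satisfies $1/(k(2k-1))\leq1/6$. Hence
\begin{align}
    0\leq\phi(u)-u^2\leq\frac{u^4}{6}\sum_{j\geq0}u^{2j}=\frac{u^4}{6(1-u^2)}\leq\frac{u^4}{3},
\end{align}
where the last step uses $u^2\leq1/2$. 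An alternative route is Taylor's theorem with remainder: $\phi''(u)=2/(1-u^2)$, so $\phi(u)-u^2=\int_0^u(u-s)\bigl(2/(1-s^2)-2\bigr)\,ds$, and this is bounded using $2s^2/(1-s^2)\leq4s^2$.

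Finally we multiply by $n$ and substitute $u=w/\sqrt{n}$. This gives
\begin{align}
    |H_n(w)-w^2|=n\,|\phi(u)-u^2|\leq\frac{n}{3}\cdot\frac{w^4}{n^2}\leq\frac{R^4}{3n}\leq\frac{4R^4}{n}
\end{align}
uniformly over $|w|\leq R$, which is the statement.

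The only point that needs care is keeping $1-u^2$ bounded away from $0$, so that the remainder does not blow up near $x\in\{0,1\}$. The hypothesis $n\geq2R^2$ handles this exactly, because it gives $u^2\leq1/2$ and so $x$ stays in the interior. For the same reason the boundary convention $\text{KL}=\log 2$ at $x\in\{0,1\}$ never comes into play.
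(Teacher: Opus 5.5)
Your proposal is correct and follows essentially the same route as the paper: the paper also writes $H_n(w)=nf(w/\sqrt{n})$ with $f(x)=(1+x)\log(1+x)+(1-x)\log(1-x)$, uses the same power series with nonnegative coefficients, and closes with a geometric series using $R^2/n\leq 1/2$. The only difference is that you keep the coefficient bound $1/(k(2k-1))\leq 1/6$, whereas the paper drops these denominators entirely; this gives you the sharper bound $R^4/(3n)$ in place of $4R^4/n$.
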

\begin{proof}
    Let $f(x)=(1+x)\log(1+x)+(1-x)\log(1-x)$. Then, we can write $H_n(w)=nf(w/\sqrt{n})$.

    For $|x|<1$, the Taylor expansion of $f(x)$ around $x=0$ gives
    \begin{align}
        f(x)=2\sum_{m=1}^{\infty}\frac{x^{2m}}{(2m-1)2m}.
    \end{align}
    Since $|w/\sqrt{n}|\leq R/\sqrt{n}\leq1/\sqrt{2}<1$, we have
    \begin{align}
        H_n(w)&=nf\left(\frac{w}{\sqrt{n}}\right)
        =n\sum_{m=1}^{\infty}\frac{2(w/\sqrt{n})^{2m}}{(2m-1)2m}
        =w^2+\sum_{m=2}^{\infty}\frac{2w^{2m}}{(2m-1)2mn^{m-1}}.
    \end{align}
    Clearly, $H_n(w)-w^2\geq0$. Thus,
    \begin{align}
        \sup_{-R\leq w\leq R}|H_n(w)-w^2|
        \leq\sum_{m=2}^{\infty}\frac{2R^{2m}}{n^{m-1}}
        =2n\sum_{m=2}^{\infty}\left(\frac{R^{2}}{n}\right)^m
        =\frac{2R^4/n}{1-R^2/n}.
    \end{align}
    Since $n\geq2R^2$, it holds that $1-R^2/n\geq1/2$. Thus,
    \begin{align}
        \frac{2R^4/n}{1-R^2/n}
        \leq\frac{2R^4/n}{1/2}
        =\frac{4R^4}{n}.
    \end{align}
    This completes the proof.
\end{proof}

\begin{lem}\label{theory:psiconverge}
    For any $M>0$ and $0<\rho<c/2$, it holds that
    \begin{align}
        \limsup_{n\to\infty}\sup_{r\in\cL_{n,M}}|\E_r[\psi_\rho(g(W_n))]-\E[\psi_\rho(g(b_n+Z))]|=0.
    \end{align}
\end{lem}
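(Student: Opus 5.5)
We will prove Lemma \ref{theory:psiconverge} with a uniform central limit theorem (Berry--Esseen) together with a quantitative smoothing argument. First, the test function $h:=\psi_\rho\circ g:\bbR^3\to[0,1]$ is bounded and continuous. Indeed, $g$ is continuous, because each map $w_i\mapsto w_i^2\mathbf{1}\{w_i<0\}$ is continuous and the minimum of continuous functions is continuous, and $\psi_\rho$ is Lipschitz. We also note $W_n-b_n=2\sqrt{n}(\hat r_n-r)$, so $W_n=b_n+V_n$, where $V_n$ has independent coordinates $V_{in}=2\sqrt{n}(\hat r_{in}-r_i)$, each a standardized binomial with variance $4r_i(1-r_i)$. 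For $r\in\cL_{n,M}$ we have $|b_{in}|\le M$ and $r_i\in[1/2,1/2+M/(2\sqrt n)]$. Hence $4r_i(1-r_i)=1-4(r_i-1/2)^2\in[1-M^2/n,1]$, so the variances converge to $1$ uniformly.

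The strategy is to argue by contradiction using sequences. If the limsup were positive, we could pick $r^{(n)}\in\cL_{n,M}$ (along a subsequence) with $|\E_{r^{(n)}}[h(W_n)]-\E[h(b_n+Z)]|\ge\varepsilon$. Passing to a further subsequence, we may assume $b_n\to b\in[0,M]^3$, since $b_n$ lies in the compact set $[0,M]^3$. Under $r^{(n)}$, each $V_{in}$ is a normalized sum of $n$ i.i.d.\ centered bounded variables with variance tending to $1$. The Lindeberg--Feller CLT for triangular arrays, which applies because the summands are bounded by $2/\sqrt n$, then gives $V_{in}\xrightarrow{d}N(0,1)$. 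By independence across $i$, $V_n\xrightarrow{d}Z$, and Slutsky's lemma gives $W_n=b_n+V_n\xrightarrow{d}b+Z$. Because $h$ is bounded and continuous, $\E_{r^{(n)}}[h(W_n)]\to\E[h(b+Z)]$. Separately, $\E[h(b_n+Z)]\to\E[h(b+Z)]$ by dominated convergence and the continuity of $h$. These two limits contradict the $\varepsilon$-gap, which proves the lemma.

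The main obstacle is ensuring that the argument is genuinely uniform over $\cL_{n,M}$ rather than pointwise. The subsequence/compactness device handles this, but only if the CLT is applied to a triangular array whose success probabilities $r_i^{(n)}$ change with $n$. We therefore have to invoke Lindeberg--Feller, or equivalently Berry--Esseen with the third absolute moment bounded uniformly (each summand is bounded by $1$ and the variance is bounded below near $1$), and not the classical i.i.d.\ CLT. An alternative direct route is a multivariate Berry--Esseen bound for $V_n$ over convex sets. Combined with the fact that $h$ is a Lipschitz function of $w$ on bounded sets, this would control the difference at rate $O(n^{-1/2})$ uniformly in $r$, once we truncate at $\|W_n\|_\infty\le R$ as in Lemma \ref{theory:tnconverge}.
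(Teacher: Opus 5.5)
Your argument is correct, and it takes a different route from the paper.

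\textbf{The paper's route.} The paper proves the lemma quantitatively through a Lindeberg-type replacement. It swaps $S_{in}=2\sqrt n(\hat r_{in}-r_i)$ (your $V_{in}$) for $Z_i$ one coordinate at a time, conditioning on the other coordinates. For each swap it uses two facts:
\begin{itemize}
\item the map $t\mapsto\psi_\rho(g(b_n+(t,s_2,s_3)))$ is $2\sqrt{c-\rho}/\rho$-Lipschitz (Lemma \ref{theory:psilip}, which rests on Lemma \ref{theory:mlip});
\item D\"obler's Wasserstein CLT bound, plus a variance-rescaling error of order $M^2/n$.
\end{itemize}
This yields the explicit uniform bound $3C/\sqrt n$ with $C=\frac{2\sqrt{c-\rho}}{\rho}(6+M^2)$.

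\textbf{Your route.} Your subsequence/compactness argument needs much less:
\begin{itemize}
\item $\psi_\rho\circ g$ is bounded and continuous;
\item $b_n$ lies in the compact set $[0,M]^3$;
\item the Lindeberg--Feller CLT applies to the triangular array with $r_i^{(n)}\to1/2$, whose summands are bounded by $2/\sqrt n$.
\end{itemize}
It is shorter and avoids the Lipschitz machinery and the external Wasserstein result entirely. It would in fact work for any bounded continuous test function.

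\textbf{The trade-off.} You lose the rate: your proof gives only $o(1)$ uniformly over $\cL_{n,M}$, not $O(n^{-1/2})$ with explicit dependence on $M$ and $\rho$. For how the lemma is used in Lemma \ref{theory:type1local}, where $M$ and $\rho$ are fixed and then $n\to\infty$, the qualitative statement is all that is needed.

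\textbf{Two minor remarks.}
\begin{itemize}
\item Your opening sentence announces a Berry--Esseen/quantitative smoothing proof, but what you actually carry out is the qualitative compactness argument.
\item The ``alternative direct route'' via a multivariate Berry--Esseen bound over convex sets would not apply as stated, because the level sets of $g$ are not convex. For example, $g$ vanishes at $(1,0,0)$ and $(0,-1,0)$ but equals $1/4$ at their midpoint. Your main argument does not depend on this remark.
\end{itemize}
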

\begin{proof}  
    Let $S_n=(S_{1n},S_{2n},S_{3n})=W_n-b_n=2\sqrt{n}(\hat{r}_n-r)$. Then $S_{1n},S_{2n},S_{3n}$ are independent. We also take $Z_1,Z_2,Z_3$ to be independent of $S_n$. 
    
    Throughout the proof, subscripts on $\E$ indicate the variables over which expectation is taken, with $r_i$ referring to $S_{in}$. Let
    \begin{align}
        E_0&=\E_{r_1,r_2,r_3}[\psi_\rho(g(b_n+(S_{1n},S_{2n},S_{3n})))],\\
        E_1&=\E_{Z_1,r_2,r_3}[\psi_\rho(g(b_n+(Z_{1},S_{2n},S_{3n})))],\\
        E_2&=\E_{Z_1,Z_2,r_3}[\psi_\rho(g(b_n+(Z_{1},Z_{2},S_{3n})))],~\text{and}\\
        E_3&=\E_{Z_1,Z_2,Z_3}[\psi_\rho(g(b_n+(Z_{1},Z_{2},Z_{3})))].
    \end{align}
    Then,
    \begin{align}
        &|\E_r[\psi_\rho(g(W_n))]-\E[\psi_\rho(g(b_n+Z))]|\\
        &=|E_0-E_3|\\
        &\leq|E_0-E_1|+|E_1-E_2|+|E_2-E_3|.
    \end{align}
    We bound each of these terms. For sufficiently large $n$ such that $n\geq 4M^2$, we have
    \begin{align}
        &|E_0-E_1|\\
        &=\left|\E_{r_2,r_3}\left[\E_{r_1}[\psi_\rho(g(b_n+(S_{1n},S_{2n},S_{3n})))]
        -\E_{Z_1}[\psi_\rho(g(b_n+(Z_1,S_{2n},S_{3n})))]\right]\right|\\
        &\leq\E_{r_2,r_3}\left[\left|\E_{r_1}[\psi_\rho(g(b_n+(S_{1n},S_{2n},S_{3n})))]
        -\E_{Z_1}[\psi_\rho(g(b_n+(Z_1,S_{2n},S_{3n})))]\right|\right]\\
        &\leq\E_{r_2,r_3}\left[\frac{C}{\sqrt{n}}\right]
        =\frac{C}{\sqrt{n}},
    \end{align}
    where the second inequality follows by applying Lemma \ref{theory:psilip} conditionally on $S_{2n}$ and $S_{3n}$. Applying Lemma \ref{theory:psilip} successively to $|E_1-E_2|$ and $|E_2-E_3|$, we have
    \begin{align}
        |E_1-E_2|\leq\frac{C}{\sqrt{n}},~~|E_2-E_3|\leq\frac{C}{\sqrt{n}},
    \end{align}
    and thus,
    \begin{align}
        |\E_r[\psi_\rho(g(W_n))]-\E[\psi_\rho(g(b_n+Z))]|\leq\frac{3C}{\sqrt{n}}.
    \end{align}
    Since $C$ does not depend on $r$ and $n$, it holds that
    \begin{align}
        \limsup_{n\to\infty}\sup_{r\in\cL_{n,M}}|\E_r[\psi_\rho(g(W_n))]-\E[\psi_\rho(g(b_n+Z))]|=0.
    \end{align}
    This completes the proof.
\end{proof}

\begin{lem}\label{theory:psilip}
    For any $M>0$, $0<\rho<c/2$ and $n\geq4M^2$, let $r\in\cL_{n,M}$. Then, the following bounds hold uniformly over $(s_1,s_2,s_3)\in\bbR^3$:
    \begin{align}
        &|\E_{r_1}[\psi_\rho(g(b_n+(S_{1n},s_2,s_3)))]
        -\E_{Z_1}[\psi_\rho(g(b_n+(Z_1,s_2,s_3)))]|
        \leq\frac{C}{\sqrt{n}},\\
        &|\E_{r_2}[\psi_\rho(g(b_n+(s_1,S_{2n},s_3)))]
        -\E_{Z_2}[\psi_\rho(g(b_n+(s_1,Z_2,s_3)))]|
        \leq\frac{C}{\sqrt{n}},~~\text{and}\\
        &|\E_{r_3}[\psi_\rho(g(b_n+(s_1,s_2,S_{3n})))]
        -\E_{Z_3}[\psi_\rho(g(b_n+(s_1,s_2,Z_3)))]|
        \leq\frac{C}{\sqrt{n}},
    \end{align}
    where $C$ can be chosen such that
    \begin{align}
        C=\frac{2\sqrt{c-\rho}}{\rho}\left(6+M^2\right).
    \end{align}
\end{lem}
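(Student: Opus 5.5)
The plan is to prove each of the three bounds by the same one-dimensional argument, shown here for the first coordinate; the other two follow by symmetry of $g$ under permutations of coordinates. Fix $(s_2,s_3)\in\bbR^3$ (the value of $s_1$ is irrelevant) and define $h(x)=\psi_\rho(g(b_n+(x,s_2,s_3)))$ for $x\in\bbR$. The left-hand side is then $|\E_{r_1}[h(S_{1n})]-\E[h(Z_1)]|$. I will show that $h$ is Lipschitz with a constant independent of $(s_2,s_3)$ and $r$. I will then bound the Wasserstein-1 distance between $S_{1n}$ and $N(0,1)$ by $(6+M^2)/\sqrt n$. The product of these two bounds gives exactly the stated $C/\sqrt n$.

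\textbf{Step 1 (Lipschitz constant).} Write $u=b_{1n}+x$. Then $g=\min(A+u^2\mathbf{1}\{u<0\},\,B+u^2\mathbf{1}\{u>0\})$, where $A,B\geq0$ do not depend on $x$. Each branch is $C^1$ in $u$ with derivative of absolute value $2|u|$, so $g(\cdot)$ is locally Lipschitz with local slope at most $2|u|$. Since $\psi_\rho$ is $1/\rho$-Lipschitz and constant outside $(c-2\rho,c-\rho)$, the derivative of $h$ can be nonzero only where $g<c-\rho$. At such a point, the active branch of the min is at most $c-\rho$; if it contains $u^2$ this forces $|u|\leq\sqrt{c-\rho}$, and otherwise its slope in $u$ is zero. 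Hence $|h'|\leq 2\sqrt{c-\rho}/\rho$ almost everywhere. Since $h$ is absolutely continuous, $L:=2\sqrt{c-\rho}/\rho$ is a global Lipschitz constant, uniformly in $(s_2,s_3)$. The points where the branches cross are handled by continuity of $\min$.

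\textbf{Step 2 (Wasserstein bound).} By Kantorovich--Rubinstein duality, $|\E h(S_{1n})-\E h(Z_1)|\leq L\,W_1(\mathcal{L}(S_{1n}),N(0,1))$. Let $\sigma^2=4r_1(1-r_1)=1-4(r_1-1/2)^2$. Because $r\in\cL_{n,M}$ and $n\geq4M^2$, we have $|r_1-1/2|\leq M/(2\sqrt n)\leq1/4$, which gives $\sigma^2\geq 1-M^2/n\geq 3/4$. I will use the triangle inequality through $N(0,\sigma^2)$.

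First, $S_{1n}$ is a normalized sum of $n$ i.i.d.\ centered summands $2(X_t-r_1)/\sqrt n$. The Stein-method Wasserstein bound for sums of independent variables (for example, $W_1\leq \sum_t\E|\xi_t|^3/\sigma^2$ in the form of Goldstein, or the Barbour--Hall type bound) gives $W_1(S_{1n},N(0,\sigma^2))\leq \text{const}\cdot\E|2(X-r_1)|^3/(\sigma^2\sqrt n)$. Here $|2(X-r_1)|\leq 3/2$, so this is at most a small absolute constant over $\sqrt n$. The aim is to make this at most $6/\sqrt n$.

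Second, the coupling $\sigma Z$ versus $Z$ gives $W_1(N(0,\sigma^2),N(0,1))\leq|1-\sigma|\,\E|Z|\leq 1-\sigma^2\leq M^2/n\leq M^2/\sqrt n$.

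Summing the two pieces gives $W_1\leq(6+M^2)/\sqrt n$, and multiplying by $L$ yields $C=\frac{2\sqrt{c-\rho}}{\rho}(6+M^2)$.

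\textbf{Main obstacle.} The hardest step is fixing the absolute constant in the Stein/Wasserstein Berry--Esseen bound so that it fits under $6$. This requires the bound to hold uniformly in $r_1$ with $\sigma$ bounded away from zero, which holds here since $\sigma^2\geq 3/4$. If a clean reference constant is not available, I would first try the explicit bound $W_1\leq \E|\xi|^3/(\sigma^3\sqrt n)$, using $\E|\xi|^3\leq 27/8$ and $\sigma^{-3}\leq(4/3)^{3/2}$. Their product, $27/8$ times roughly $1.54$, is about $5.2$, which is below $6$. A secondary point is the a.e.-derivative argument in Step 1 at kinks of the min, which continuity of $\min$ resolves.
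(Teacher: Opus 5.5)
Your proposal is correct and is essentially the paper's argument: a uniform Lipschitz constant $2\sqrt{c-\rho}/\rho$ for the one-dimensional section of $\psi_\rho\circ g$, multiplied by a Wasserstein bound that splits into a Berry--Esseen piece (at most $6/\sqrt{n}$) and a variance-mismatch piece (at most $M^2/n$). The paper differs only in obtaining the Lipschitz constant via the $1$-Lipschitz map $m$ and the Lipschitz map $s\mapsto\psi_\rho(s^2)$ instead of your direct derivative check, and in interpolating through $S_{1n}/(2\sqrt{r_1(1-r_1)})$ rather than $N(0,\sigma^2)$.

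One point to pin down is the Berry--Esseen constant. The paper cites D\"obler's bound, whose constant is $3$, and under that constant your crude estimate $\E|\xi|^3\le 27/8$ gives about $15.6/\sqrt{n}$, which exceeds $6/\sqrt{n}$. You must therefore either use the constant-$1$ $L^1$ Berry--Esseen bound (Goldstein, 2010) as you suggest, or use the exact value $\E|\xi|^3=8r_1(1-r_1)(r_1^2+(1-r_1)^2)\le 1$, which yields at most $4/\sqrt{n}$ even with constant $3$.
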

\begin{proof}
    For any $x=(x_1,x_2,x_3)\in\bbR^3$, let
    \begin{align}
        m(x)=\min\left(\sqrt{\sum_{i=1}^{3}x_i^2\mathbf{1}\{x_i<0\}},~\sqrt{\sum_{i=1}^{3}x_i^2\mathbf{1}\{x_i>0\}}\right).
    \end{align}
    Then, for any $x,y\in\bbR^3$, we have $|\psi_\rho(g(x))-\psi_\rho(g(y))|=|\psi_\rho((m(x))^2)-\psi_\rho((m(y))^2)|$.

    The function $s\mapsto\psi_\rho(s^2)$ is continuous on $[0,\infty)$ and differentiable except at finitely many points, $\sqrt{c-2\rho}$ and $\sqrt{c-\rho}$. At every point of differentiability, we have
    \begin{align}
        \left|\frac{d}{ds}(\psi_\rho(s^2))\right|\leq\frac{2\sqrt{c-\rho}}{\rho}.
    \end{align}
    Hence, $\psi_\rho(s^2)$ is $2\sqrt{c-\rho}/\rho$-Lipschitz on $[0,\infty)$. Thus, 
    \begin{align}
        |\psi_\rho((m(x))^2)-\psi_\rho((m(y))^2)|\leq\frac{2\sqrt{c-\rho}}{\rho}|m(x)-m(y)|
        \leq\frac{2\sqrt{c-\rho}}{\rho}\|x-y\|_2,
    \end{align}
    where the last inequality holds by Lemma \ref{theory:mlip}. Therefore, a function $h:\bbR\to\bbR$ defined as
    \begin{align}
        h(t)=\frac{\rho}{2\sqrt{c-\rho}}\psi_{\rho}(g(b_n+(t,s_2,s_3)))
    \end{align}
    satisfies $h\in\text{Lip}(1)$. Using this, we have
    \begin{align}
        &|\E_{r_1}[\psi_\rho(g(b_n+(S_{1n},s_2,s_3)))]
        -\E_{Z_1}[\psi_\rho(g(b_n+(Z_1,s_2,s_3))))]|\\
        &=\frac{2\sqrt{c-\rho}}{\rho}|\E_{r_1}[h(S_{1n})]-\E_{Z_1}[h(Z_1)]|\\
        &\leq\frac{2\sqrt{c-\rho}}{\rho}\left|\E_{r_1}[h(S_{1n})]-\E_{r_1}\left[h\left(\frac{S_{1n}}{2\sqrt{r_1(1-r_1)}}\right)\right]\right|\\
        &\quad+\frac{2\sqrt{c-\rho}}{\rho}\left|\E_{r_1}\left[h\left(\frac{S_{1n}}{2\sqrt{r_1(1-r_1)}}\right)\right]-\E_{Z_1}[h(Z_1)]\right|.\label{eq:h}
    \end{align}
    To bound the right-hand side, we introduce a proposition:
    \begin{prop}\label{theory:wassersteinclt}[\citesuppl{dobler2015new}, Corollary 2.10.]
    Let $X_1,X_2,\dots$ be i.i.d. with $\E[X_1]=0$ and $\text{Var}(X_1)=1$. Let $Z\sim N(0,1)$. Then,
    \begin{align}
        \sup_{h\in\text{Lip}(1)}\left|\E\left[h\left(\frac{1}{\sqrt{n}}\sum_{i=1}^{n}X_i\right)\right]-\E[h(Z)]\right|\leq\frac{3\E[|X_1|^3]}{\sqrt{n}},
    \end{align}
    where $\text{Lip}(1)=\left\{f:\bbR\to\bbR:|f(x)-f(y)|\leq |x-y|~\forall x,y\in\bbR\right\}$.
\end{prop}

    By definition, $S_{1n}=W_{1n}-b_{1n}=(1/\sqrt{n})\sum_{t=1}^{n}2(X_{t,12}-r_1)$, where $\E[2(X_{t,12}-r_1)]=0$ and $\text{Var}(2(X_{t,12}-r_1))=4r_1(1-r_1)$. Thus, Proposition \ref{theory:wassersteinclt} implies
    \begin{align}
        \left|\E_{r_1}\left[h\left(\frac{S_{1n}}{2\sqrt{r_1(1-r_1)}}\right)\right]-\E_{Z_1}\left[h\left(Z_1\right)\right]\right|
        &\leq\frac{3}{\sqrt{n}}\E_{r_1}\left[\left|\frac{X_{t,12}-r_1}{\sqrt{r_1(1-r_1)}}\right|^3\right]\\
        &=\frac{3(r_1^2+(1-r_1)^2)}{\sqrt{nr_1(1-r_1)}}.\label{eq:h1}
    \end{align}
    Since $n\geq4M^2$ and $r\in\cL_{n,M}$, we have $1/2\leq r_1\leq3/4$, which implies $r_1^2+(1-r_1)^2\leq5/8$ and $r_1(1-r_1)\geq3/16$. Therefore,
    \begin{align}
        \frac{3(r_1^2+(1-r_1)^2)}{\sqrt{nr_1(1-r_1)}}
        \leq\frac{15}{2\sqrt{3n}}
        \leq\frac{6}{\sqrt{n}}.
    \end{align}

    On the other hand, we have
    \begin{align}
        &\left|\E_{r_1}[h(S_{1n})]-\E_{r_1}\left[h\left(\frac{S_{1n}}{2\sqrt{r_1(1-r_1)}}\right)\right]\right|
        \leq\E_{r_1}\left[\left|h(S_{1n})-h\left(\frac{S_{1n}}{2\sqrt{r_1(1-r_1)}}\right)\right|\right]\\
        &\leq\left|1-\frac{1}{2\sqrt{r_1(1-r_1)}}\right|\E_{r_1}\left[\left|S_{1n}\right|\right]
        \leq\left(\frac{1}{2\sqrt{r_1(1-r_1)}}-1\right)\sqrt{\E_{r_1}\left[S_{1n}^2\right]}\\
        &=\left(\frac{1}{2\sqrt{r_1(1-r_1)}}-1\right)2\sqrt{r_1(1-r_1)}
        =1-2\sqrt{r_1(1-r_1)}\\
        &\leq1-2\sqrt{\left(\frac{1}{2}+\frac{M}{2\sqrt{n}}\right)\left(\frac{1}{2}-\frac{M}{2\sqrt{n}}\right)}
        =1-\sqrt{1-\frac{M^2}{n}}
        \leq\frac{M^2}{n},\label{eq:h2}
    \end{align}
    where the first inequality is Jensen, the second inequality holds since $h\in\text{Lip}(1)$ and $r_1\geq1/2$, third inequality is Cauchy-Schwarz, fourth inequality holds since $r_1\leq1/2+M/(2\sqrt{n})$, and the last inequality uses $1-\sqrt{1-x}\leq x$ for $0\leq x\leq1$.

    Combining \eqref{eq:h}, \eqref{eq:h1} and \eqref{eq:h2} yields
    \begin{align}
        &|\E_{r_1}[\psi_\rho(g(b_n+(S_{1n},s_2,s_3)))]
        -\E_{Z_1}[\psi_\rho(g(b_n+(Z_1,s_2,s_3)))]|\\
        &\leq\frac{2\sqrt{c-\rho}}{\rho}\left(\frac{6}{\sqrt{n}}+\frac{M^2}{n}\right)
        \leq\frac{2\sqrt{c-\rho}}{\rho}\frac{6+M^2}{\sqrt{n}}
        =\frac{C}{\sqrt{n}}.
    \end{align}
    This bound does not depend on $s_2$ and $s_3$, and hence is uniform over them. The corresponding bounds for the second and third coordinates follow by permutation symmetry. This completes the proof.
\end{proof}

\begin{lem}\label{theory:mlip}
    For any $x,y\in\bbR^3$, it holds that
    \begin{align}
        |m(x)-m(y)|\leq\|x-y\|_2.
    \end{align}
\end{lem}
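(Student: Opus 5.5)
We must show that $m(x)=\min(\|x^-\|_2,\|x^+\|_2)$ is $1$-Lipschitz in the Euclidean norm. Here $x^-=(x_i\mathbf{1}\{x_i<0\})_{i\in[3]}$ and $x^+=(x_i\mathbf{1}\{x_i>0\})_{i\in[3]}$, so that $\sqrt{\sum_i x_i^2\mathbf{1}\{x_i<0\}}=\|x^-\|_2$ and similarly for $x^+$.

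\textbf{Step 1: each coordinate map is a contraction.} The maps $t\mapsto t\mathbf{1}\{t<0\}=\min(t,0)$ and $t\mapsto t\mathbf{1}\{t>0\}=\max(t,0)$ are $1$-Lipschitz on $\bbR$. This is because they are projections onto the closed convex sets $(-\infty,0]$ and $[0,\infty)$. Applying this coordinatewise gives
\begin{align}
|x_i^--y_i^-|\leq|x_i-y_i|~~\text{and}~~|x_i^+-y_i^+|\leq|x_i-y_i|~~\text{for each}~i\in[3].
\end{align}
Hence $\|x^--y^-\|_2\leq\|x-y\|_2$ and $\|x^+-y^+\|_2\leq\|x-y\|_2$.

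\textbf{Step 2: the norms move by at most $\|x-y\|_2$.} By the reverse triangle inequality for $\|\cdot\|_2$,
\begin{align}
\bigl|\|x^-\|_2-\|y^-\|_2\bigr|\leq\|x^--y^-\|_2\leq\|x-y\|_2,
\end{align}
and the same bound holds for the $+$ parts.

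\textbf{Step 3: pass to the minimum.} Apply Lemma \ref{theory:minmax} with $a=\|x^-\|_2$, $b=\|x^+\|_2$, $c=\|y^-\|_2$ and $d=\|y^+\|_2$. This yields
\begin{align}
|m(x)-m(y)|\leq\max\bigl(\bigl|\|x^-\|_2-\|y^-\|_2\bigr|,\bigl|\|x^+\|_2-\|y^+\|_2\bigr|\bigr)\leq\|x-y\|_2,
\end{align}
which is the claim.

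There is no real obstacle. The only point needing care is Step 1, which is the elementary fact that $\min(\cdot,0)$ and $\max(\cdot,0)$ are $1$-Lipschitz. It can be checked directly by cases on the signs of $x_i$ and $y_i$: when both have the same sign, the difference is either $0$ or $|x_i-y_i|$; when they have opposite signs, it is at most $|x_i|\leq|x_i-y_i|$ or $|y_i|\leq|x_i-y_i|$.
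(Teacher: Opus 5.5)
Your proof is correct and takes essentially the same route as the paper. Both write $m(x)=\min(\|x^-\|_2,\|x^+\|_2)$, apply Lemma \ref{theory:minmax}, then the reverse triangle inequality, then the $1$-Lipschitz property of the coordinatewise positive- and negative-part maps; the only cosmetic difference is that your $x^-$ has entries $\min(x_i,0)$ rather than $\max(-x_i,0)$, which leaves the norms unchanged.
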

\begin{proof}
    For any $s\in\bbR$, let $s^-=\max(-s,0)$ and $s^+=\max(s,0)$. Also, let $x^-=(x_1^-,x_2^-,x_3^-)$, $x^+=(x_1^+,x_2^+,x_3^+)$, $y^-=(y_1^-,y_2^-,y_3^-)$ and $y^+=(y_1^+,y_2^+,y_3^+)$. Then, $m(x)=\min(\|x^-\|_2,\|x^+\|_2)$ and $m(y)=\min(\|y^-\|_2,\|y^+\|_2)$. By Lemma \ref{theory:minmax} and the reverse triangle inequality, we have
    \begin{align}
        |m(x)-m(y)|
        &\leq\max(|\|x^-\|_2-\|y^-\|_2|, |\|x^+\|_2-\|y^+\|_2|)\\
        &\leq\max(\|x^--y^-\|_2,\|x^+-y^+\|_2).
    \end{align}
    Since the maps $s\mapsto s^-$ and $s\mapsto s^+$ are $1$-Lipschitz, we have $\|x^--y^-\|_2\leq\|x-y\|_2$ and $\|x^+-y^+\|_2\leq \|x-y\|_2$. Therefore,
    \begin{align}
        |m(x)-m(y)|\leq\|x-y\|_2.
    \end{align}
    This completes the proof.
\end{proof}

\begin{lem}\label{theory:gaussian}
    For any $b=(b_1,b_2,b_3)\in[0,\infty)^3$ and any $t>0$, it holds that
    \begin{align}
        \Pro(g(b+Z)>t)\leq\frac{1}{2}\Pro(\chi_1^2>t)+\frac{1}{4}\Pro(\chi_2^2>t).
    \end{align}
\end{lem}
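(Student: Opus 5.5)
Since $g(b+Z)\le \sum_{i=1}^3 (b_i+Z_i)^2\mathbf{1}\{b_i+Z_i<0\}$ (the minimum is at most its first argument), it suffices to bound the upper tail of this negative-part sum. The obstacle is that the sum has three terms, so its tail at $b=0$ is a $(1/8)\chi_0^2+(3/8)\chi_1^2+(3/8)\chi_2^2+(1/8)\chi_3^2$ tail, which exceeds the target. Dropping the minimum entirely is therefore too crude, and the plan must use both arguments of $g$.

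\textbf{Step 1: reduce to the three-dimensional chi-bar-square at $b=0$.} Write $Y=b+Z$ and $g(Y)=\min(\|Y^-\|_2^2,\|Y^+\|_2^2)$. The first step is to show that $t\mapsto\Pro(g(b+Z)>t)$ is maximized at $b=0$ over $[0,\infty)^3$. I would argue coordinatewise, as in Lemma \ref{theory:dominance}. Condition on $Y_2,Y_3$. The map $y_1\mapsto g(y_1,Y_2,Y_3)$ is not monotone: increasing $y_1$ lowers the negative part and raises the positive part. Instead I would compare the distributions of $Y_1=b_1+Z_1$ and $Z_1$ on the set $\{y_1: g(y_1,Y_2,Y_3)>t\}$. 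This set has the form $\{y_1<-\alpha\}\cap\{\ldots\}$ or $\{y_1>\beta\}\cap\{\ldots\}$, depending on signs, and is a union of at most two half-lines $(-\infty,-\alpha)\cup(\beta,\infty)$ with appropriate restrictions.

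That form makes the key step hard. A shift $b_1>0$ lowers the mass on the left half-line but raises it on the right, so the sup at $b=0$ is not immediate, and I regard this as the main obstacle.

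\textbf{Step 2: replace the monotonicity argument with a direct geometric count.} Since monotonicity may fail, I would instead bound $\Pro(g(Y)>t)$ directly by splitting on the sign pattern of $Y$. We have $g(Y)>0$ only if $Y$ has both a negative and a positive coordinate, so either exactly one coordinate is negative or exactly one is positive.

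Consider first the case where exactly one coordinate, say $i$, is negative. Then $g(Y)\le Y_i^2$, with $Y_i<0$, and the other coordinates $Y_j,Y_k$ are $\ge0$. Summing over the six sign patterns gives
\begin{align}
\Pro(g(Y)>t)\le \sum_{i}\Pro\big(\min(Y_i^2\mathbf{1}\{Y_i<0\},\,Y_j^2+Y_k^2)>t,\ Y_j,Y_k\ge0\big)+\sum_i \Pro\big(\min(Y_i^2\mathbf 1\{Y_i>0\},\,Y_j^2+Y_k^2)>t,\ Y_j,Y_k\le0\big).
\end{align}
Each summand is a product of independent factors. For $b_i\ge0$, the factors of the form $\Pro(Y_i<-\sqrt t)$ and $\Pro(Y_j\le0)$ are nonincreasing in $b$, while the factors involving $Y_i>0$ are nondecreasing. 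In the worst case one tail grows while the other shrinks, so I would handle the trade-off explicitly.

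\textbf{Step 3: carry out the trade-off and check the boundary.} For a fixed pair of coordinates, I would show that the combined function in $b_i$ is maximized at $b_i=0$ or as $b_i\to\infty$. As $b_i\to\infty$, a coordinate becomes almost surely positive, and the problem reduces to two coordinates, where $g$ becomes the negative-part sum of the remaining two plus a positive piece. I would then verify the bound $(1/2)\Pro(\chi_1^2>t)+(1/4)\Pro(\chi_2^2>t)$ in that limit, which is exactly the $(1/2,1/2,\infty)$ configuration. At $b=0$, a direct computation with the symmetric sign patterns should give a value not larger than this, and I would check that numerically-informed inequality analytically using $\Pro(\chi_2^2>t)=e^{-t/2}$.
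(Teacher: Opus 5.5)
\textbf{There is a genuine gap.} The goal of your Step 1 is false, and the endpoint claim in Step 3, which carries the entire lemma, is asserted rather than proved.

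At $b=0$ the probability can be computed exactly. Six sign patterns contribute, each giving $\frac18\Pro(\chi_1^2>t)\Pro(\chi_2^2>t)$, so $\Pro(g(Z)>t)=\frac34\Pro(\chi_1^2>t)e^{-t/2}$. This is strictly below $\frac12\Pro(\chi_1^2>t)+\frac14e^{-t/2}$. The bound is only approached at infinity, for example along $b=(B,0,0)$ with $B\to\infty$. So the supremum is not at $b=0$. What must be shown is that no drift in $[0,\infty)^3$ can beat that limit.

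Your Step 2 does not show this. It writes the probability, as a function of $b_1$, as a sum of products such as $\Pro(Y_1<-\sqrt t)\,\Pro(Y_2,Y_3\ge 0,\,Y_2^2+Y_3^2>t)$ and $\Pro(Y_1>\sqrt t)\,\Pro(Y_2,Y_3\le 0,\,Y_2^2+Y_3^2>t)$, plus terms where $Y_1$ enters the quadrant factor. Some of these are nonincreasing in $b_1$ and some nondecreasing. Nothing in that decomposition rules out an interior maximum, and ``handle the trade-off explicitly'' is not an argument.

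Your half-line observation does not settle it either. Conditionally on $(Y_2,Y_3)$ you get $\Phi(-\alpha-b_1)+\bar\Phi(\beta-b_1)$. This is quasi-convex in $b_1$ with minimizer $(\beta-\alpha)/2$, but that minimizer is random. A mixture of quasi-convex functions with different minimizers need not be quasi-convex. So the claim ``maximized at $b_i=0$ or as $b_i\to\infty$'' is exactly the missing content.

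\textbf{How the paper closes it.} The paper uses Lemma \ref{theory:schur}. Write $b_1+Z_1=U+V$ and $b_2+Z_2=U-V$, with $U$, $V$, $Z_3$ independent. Then $g$ is nondecreasing in $|V|$ for fixed $(U,b_3+Z_3)$. By Lemma \ref{theory:normalincreasing}, $|V|$ becomes stochastically larger when its mean $(b_1-b_2)/2$ is replaced by $(b_1+b_2)/2$. Repeating this over pairs moves all the drift into one coordinate: $\Pro(g(b+Z)>t)\le\Pro(g((b_1+b_2+b_3+Z_1,Z_2,Z_3))>t)$. Then $g$ is at most the negative-part sum of $(Z_2,Z_3)$ when the first coordinate is nonnegative, and at most the positive-part sum otherwise. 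Independence and $(Z_2,Z_3)\overset{d}{=}(-Z_2,-Z_3)$ give exactly $\frac12\Pro(\chi_1^2>t)+\frac14\Pro(\chi_2^2>t)$, whatever the drift.

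\textbf{How your corner plan could be completed.} It needs one more idea.
\begin{itemize}
\item Because $y_1\mapsto g(y_1,Y_2,Y_3)$ is V-shaped about $0$, one always has $\alpha,\beta\ge 0$.
\item Then $\frac{d}{db_1}\Pro(g(b+Z)>t)=\phi(b_1)\,\E\bigl[e^{\beta b_1-\beta^2/2}-e^{-\alpha b_1-\alpha^2/2}\bigr]$, and the bracket is nondecreasing in $b_1$.
\item The common positive factor $\phi(b_1)$ is what survives the mixing over $(Y_2,Y_3)$. So the probability is quasi-convex in each coordinate.
\item Hence it is at most the larger of its value at $b_1=0$ and its limit as $b_1\to\infty$. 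That limit is $\Pro((Y_2^-)^2+(Y_3^-)^2>t)\le\frac12\Pro(\chi_1^2>t)+\frac14e^{-t/2}$.
\item Iterating over the coordinates leaves only $\frac34\Pro(\chi_1^2>t)e^{-t/2}$ and these one-coordinate-at-infinity limits, which gives the lemma.
\end{itemize}
Without one of these two arguments, the proposal does not prove the statement.
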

\begin{proof}
    By Lemma \ref{theory:schur}, we have
    \begin{align}
        &\Pro(g(b+Z)>t)\\
        &\leq\Pro(g((b_1+b_2+b_3+Z_1,Z_2,Z_3))>t)\\
        &\leq\Pro(Z_2^2\mathbf{1}\{Z_2<0\}+Z_3^2\mathbf{1}\{Z_3<0\}>t,b_1+b_2+b_3+Z_1\geq0)\\
        &\quad+\Pro(Z_2^2\mathbf{1}\{Z_2>0\}+Z_3^2\mathbf{1}\{Z_3>0\}>t,b_1+b_2+b_3+Z_1<0).
    \end{align}
    By independence of $Z_1$ and $(Z_2,Z_3)$, and since $(Z_2,Z_3)\overset{d}{=}(-Z_2,-Z_3)$, it holds that
    \begin{align}
        &\Pro(Z_2^2\mathbf{1}\{Z_2<0\}+Z_3^2\mathbf{1}\{Z_3<0\}>t,b_1+b_2+b_3+Z_1\geq0)\\
        &\quad+\Pro(Z_2^2\mathbf{1}\{Z_2>0\}+Z_3^2\mathbf{1}\{Z_3>0\}>t,b_1+b_2+b_3+Z_1<0)\\
        &=\Pro(b_1+b_2+b_3+Z_1\geq0)\Pro(Z_2^2\mathbf{1}\{Z_2<0\}+Z_3^2\mathbf{1}\{Z_3<0\}>t)\\
        &\quad+\Pro(b_1+b_2+b_3+Z_1<0)\Pro(Z_2^2\mathbf{1}\{Z_2>0\}+Z_3^2\mathbf{1}\{Z_3>0\}>t)\\
        &=\Pro(Z_2^2\mathbf{1}\{Z_2>0\}+Z_3^2\mathbf{1}\{Z_3>0\}>t)\\
        &=\frac{1}{2}\Pro(\chi_1^2>t)+\frac{1}{4}\Pro(\chi_2^2>t).
    \end{align}
    This completes the proof.
\end{proof}

\begin{lem}\label{theory:schur}
    For any $b_1,b_2,b_3\geq0$ and any $t>0$, it holds that
    \begin{align}
        \Pro(g((b_1+Z_1,b_2+Z_2,b_3+Z_3))>t)\leq\Pro(g((b_1+b_2+b_3+Z_1,Z_2,Z_3))>t).
    \end{align}
\end{lem}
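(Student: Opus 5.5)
The plan is to merge the shifts one coordinate at a time: first move $b_2$ into the first coordinate, then move $b_3$ there as well. Concretely, I will show that for any $b_1,b_2\geq0$ and any fixed $w_3\in\bbR$,
\begin{align}
\Pro(g((b_1+Z_1,b_2+Z_2,w_3))>t)\leq\Pro(g((b_1+b_2+Z_1,Z_2,w_3))>t).
\end{align}
Integrating over $w_3=b_3+Z_3$, which is independent of $(Z_1,Z_2)$, gives the first merge. The function $g$ is symmetric under permutations of its coordinates, so the same two-coordinate argument applied to coordinates $1$ and $3$, with shifts $b_1+b_2\geq0$ and $b_3\geq0$, gives the second merge. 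This yields the statement.

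For the two-coordinate claim I change variables to sum and difference. Put $x=b_1+Z_1$, $y=b_2+Z_2$, $s=x+y$ and $x=s/2+d$, $y=s/2-d$. Then $s=b_1+b_2+Z_1+Z_2$ and
\begin{align}
d=\delta+D,\qquad D=(Z_1-Z_2)/2,\qquad \delta=(b_1-b_2)/2.
\end{align}
Because $Z_1+Z_2$ and $Z_1-Z_2$ are independent Gaussians, $s$ is independent of $D$, and $D$ is a centered symmetric Gaussian. For the merged configuration $(b_1+b_2,0)$, the sum $s$ has the same law, while the shift becomes $\delta'=(b_1+b_2)/2$. Since $b_1,b_2\geq0$, we have $\delta'\geq|\delta|$. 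So it suffices to fix $s$ and $w_3$ and set
\begin{align}
h(d)=g((s/2+d,\,s/2-d,\,w_3)).
\end{align}
It then remains to show that $\Pro(h(\delta+D)>t)$ is nondecreasing in $|\delta|$.

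The key structural fact, which I expect to be the main point to verify, is that $h$ is even and nondecreasing in $|d|$. Evenness holds because swapping the first two coordinates leaves $g$ unchanged. For monotonicity, take $d\geq0$ and assume $s\geq0$; the case $s<0$ follows by the sign symmetry $g(-w)=g(w)$.
\begin{itemize}
\item Since $s\geq0$ and $d\geq0$, the first coordinate $x=s/2+d$ stays nonnegative. So the negative-part sum equals $(y^-)^2$ plus a term depending only on $w_3$, and it is nondecreasing in $d$ because $y=s/2-d$ decreases.
\item The positive-part sum equals $x^2+(y^+)^2$ plus a term depending only on $w_3$. Its derivative in $d$ is $2x-2y^+\geq0$, because $x\geq y$ and $x\geq0$.
\end{itemize}
Both sums are therefore nondecreasing in $d\geq0$, and so is their minimum $h$.

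Consequently $\{d:h(d)\leq t\}$ is a symmetric interval $I$ about $0$, possibly degenerate or unbounded. For a symmetric unimodal law such as the Gaussian $D$, the mass $\Pro(\delta+D\in I)$ is nonincreasing in $|\delta|$. This can be checked directly by differentiating in $\delta$ (Anderson's theorem in one dimension). Hence $\Pro(h(\delta+D)>t)\leq\Pro(h(\delta'+D)>t)$. Integrating over $s$ and $w_3$ proves the two-coordinate claim, and applying it twice completes the argument.
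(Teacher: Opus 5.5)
Your proposal is correct and follows essentially the same route as the paper. Both use the sum/difference change of variables with the independence of the sum, the difference and the third coordinate, and both show that $g$ is nondecreasing in the absolute difference coordinate; you check this by a derivative after reducing to $s\geq 0$ via $g(-w)=g(w)$, whereas the paper does a four-case computation. Both then use the fact that the Gaussian mass outside a symmetric interval grows as $|\delta|$ grows (the paper's Lemma \ref{theory:normalincreasing}), and apply the two-coordinate merge to pairs $(1,2)$ and then $(1,3)$.
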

\begin{proof}
    Let
    \begin{align}
        U=\frac{(b_1+Z_1)+(b_2+Z_2)}{2},~
        V=\frac{(b_1+Z_1)-(b_2+Z_2)}{2}.
    \end{align}
    Then, $b_1+Z_1=U+V$ and $b_2+Z_2=U-V$, and $U,V,Z_3$ are mutually independent. Thus,
    \begin{align}
        &g((b_1+Z_1,b_2+Z_2,b_3+Z_3))\\
        &=\min(((U+V)^-)^2+((U-V)^-)^2+((b_3+Z_3)^-)^2,\\
        &\quad\quad\quad((U+V)^+)^2+((U-V)^+)^2+((b_3+Z_3)^+)^2),
    \end{align}
    where $x^-=\max(-x,0)$ and $x^+=\max(x,0)$. Moreover, both components in the minimum are non-decreasing in $|V|$. Indeed,
    \begin{enumerate}
        \item If $U\geq0$ and $0\leq|V|\leq U$, then $((U+|V|)^-)^2+((U-|V|)^-)^2=0$ and $((U+|V|)^+)^2+((U-|V|)^+)^2=2U^2+2|V|^2$.
        \item If $U\geq0$ and $0\leq U<|V|$, then $((U+|V|)^-)^2+((U-|V|)^-)^2=(U-|V|)^2$ and $((U+|V|)^+)^2+((U-|V|)^+)^2=(U+|V|)^2$.
        \item If $U<0$ and $0\leq|V|\leq-U$, then $((U+|V|)^-)^2+((U-|V|)^-)^2=2U^2+2|V|^2$ and $((U+|V|)^+)^2+((U-|V|)^+)^2=0$.
        \item If $U<0$ and $0\leq-U<|V|$, then $((U+|V|)^-)^2+((U-|V|)^-)^2=(U-|V|)^2$ and $((U+|V|)^+)^2+((U-|V|)^+)^2=(U+|V|)^2$.
    \end{enumerate}
    Therefore, the minimum is also non-decreasing in $|V|$. Hence, for any $t$, there exists a nonnegative measurable function $f_t$ such that
    \begin{align}
        \{g((b_1+Z_1,b_2+Z_2,b_3+Z_3))>t\}
        =\{|V|>f_t(U,b_3+Z_3)\}~~\text{a.s.}
    \end{align}
    Thus, we have
    \begin{align}
        &\Pro(g((b_1+Z_1,b_2+Z_2,b_3+Z_3))>t)\\&=\E[\Pro(g((b_1+Z_1,b_2+Z_2,b_3+Z_3))>t\mid U,b_3+Z_3)]\\
        &=\E[\Pro(|V|>f_t(U,b_3+Z_3)\mid U,b_3+Z_3)].
    \end{align}
    To bound this, consider
    \begin{align}
        V'=\frac{(b_1+b_2+Z_1)-Z_2}{2}.
    \end{align}
    Then $U,V',Z_3$ are mutually independent, and $V\sim N((b_1-b_2)/2,1/2)$, $V'\sim N((b_1+b_2)/2,1/2)$. Since $|\E[V]|=|b_1-b_2|/2\leq(b_1+b_2)/2=\E[V']$, applying Lemma \ref{theory:normalincreasing} gives
    \begin{align}
        &\E[\Pro(|V|>f_t(U,b_3+Z_3)\mid U,b_3+Z_3)]\\
        &\leq\E\left[\Pro\left(\left|V'\right|>f_t(U,b_3+Z_3)\mid U,b_3+Z_3\right)\right]\\
        &=\E[\Pro(g((b_1+b_2+Z_1,Z_2,b_3+Z_3))>t\mid U,b_3+Z_3)]\\
        &=\Pro(g((b_1+b_2+Z_1,Z_2,b_3+Z_3))>t).
    \end{align}
    By permutation symmetry, the same argument can be applied to any pair of coordinates. Applying it to the first and third coordinates gives 
    \begin{align}
        \Pro(g((b_1+b_2+Z_1,Z_2,b_3+Z_3))>t)
        \leq\Pro(g((b_1+b_2+b_3+Z_1,Z_2,Z_3))>t).
    \end{align}
    This completes the proof.
\end{proof}

\begin{lem}\label{theory:normalincreasing}
    For any $\mu_1,\mu_2\in\bbR$, any $\sigma>0$, let $X_1\sim N(\mu_1,\sigma^2)$ and $X_2\sim N(\mu_2,\sigma^2)$. If $|\mu_1|\leq|\mu_2|$, then for any $a\geq0$, it holds that
    \begin{align}
        \Pro\left(\left|X_1\right|>a\right)
        \leq\Pro\left(\left|X_2\right|>a\right).
    \end{align}
\end{lem}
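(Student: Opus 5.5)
We want $\Pro(|X_1|>a)\le\Pro(|X_2|>a)$ whenever $|\mu_1|\le|\mu_2|$. First reduce to nonnegative means. The map $x\mapsto-x$ preserves $|x|$ and sends $N(\mu,\sigma^2)$ to $N(-\mu,\sigma^2)$, so $\Pro(|X|>a)$ depends on $\mu$ only through $|\mu|$. Hence it suffices to show that
\begin{align}
\phi(\mu)=\Pro(|\mu+\sigma Z|>a)=1-\Phi\left(\frac{a-\mu}{\sigma}\right)+\Phi\left(\frac{-a-\mu}{\sigma}\right),~~Z\sim N(0,1),
\end{align}
is non-decreasing in $\mu\in[0,\infty)$.

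The plan is to differentiate in $\mu$. With $\varphi$ the standard normal density,
\begin{align}
\phi'(\mu)=\frac{1}{\sigma}\left[\varphi\left(\frac{a-\mu}{\sigma}\right)-\varphi\left(\frac{a+\mu}{\sigma}\right)\right].
\end{align}
For $\mu\ge0$ and $a\ge0$ we have $|a-\mu|\le a+\mu=|a+\mu|$. Since $\varphi$ is even and decreasing in $|x|$, this gives $\phi'(\mu)\ge0$.

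Then $\phi(|\mu_1|)\le\phi(|\mu_2|)$, which is the claim. The case $a=0$ is trivial, since both probabilities equal $1$.

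There is no real obstacle here. The only point needing care is the justification of differentiability, which is immediate because $\Phi$ is smooth.

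An alternative route avoids calculus. Write $\Pro(|X|\le a)=\Pro(Z\in[(-a-\mu)/\sigma,(a-\mu)/\sigma])$. This is the Gaussian measure of an interval of fixed length whose center moves away from $0$ as $|\mu|$ grows. Anderson's inequality, or the symmetric unimodality of $\varphi$, then gives the result directly.
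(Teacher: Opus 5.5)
Your proposal is correct and follows essentially the same route as the paper. Both arguments use the evenness of $\mu\mapsto\Pro(|X_\mu|>a)$ to reduce to $\mu\ge0$, then differentiate $1-\Phi((a-\mu)/\sigma)+\Phi((-a-\mu)/\sigma)$ and use $|a-\mu|\le a+\mu$ together with the symmetric unimodality of the normal density to conclude the derivative is nonnegative.
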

\begin{proof}
    Let $H_a(\mu)=\Pro(|X_\mu|>a)$ where $X_\mu\sim N(\mu,\sigma^2)$. Since $H_a(\mu)$ is an even function, it is sufficient to check that $H_a(\mu)$ is non-decreasing when $\mu\geq0$. It holds that
    \begin{align}
        H_a(\mu)=1-\Phi\left(\frac{a-\mu}{\sigma}\right)+\Phi\left(\frac{-a-\mu}{\sigma}\right),
    \end{align}
    where $\Phi$ is the distribution function of $N(0,1)$. Thus,
    \begin{align}
        H_a'(\mu)=\frac{1}{\sigma}\left(\phi\left(\frac{a-\mu}{\sigma}\right)-\phi\left(\frac{a+\mu}{\sigma}\right)\right)\geq0,
    \end{align}
    where $\phi$ is the density function of $N(0,1)$. This completes the proof.
\end{proof}

\begin{lem}\label{theory:lowerbound}
For any $c>0$,
    \begin{align}
        \liminf_{n\to\infty}\sup_{r\in C_+}\Pro_r(T_{n,3}>c)
        \geq\frac{1}{2}\Pro(\chi_1^2>c)+\frac{1}{4}\Pro(\chi_2^2>c).
    \end{align}
\end{lem}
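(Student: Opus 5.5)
It suffices to exhibit a single point of $C_+$ whose rejection probability converges to the claimed value, since $\sup_{r\in C_+}\Pro_r(T_{n,3}>c)\geq\Pro_{r^*}(T_{n,3}>c)$ for every fixed $r^*\in C_+$. Following the remark after Theorem \ref{thm:type1}, I would take the least-favorable configuration $r^*=(1/2,1/2,1)$, i.e.\ $(p_{12},p_{23},p_{13})=(1/2,1/2,0)$. It lies in $C_+$ because $1/2\leq r_i\leq1$ for every $i$.

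Under $r^*$ we have $\hat r_{3n}=1$ almost surely, so $D_n^-(\hat r_{3n})=0$ and $D_n^+(\hat r_{3n})=2n\log2$. Using the representation $T_{n,3}=\min(Q_n^-(\hat r_n),Q_n^+(\hat r_n))$ from the proof of Lemma \ref{theory:type1notlocal}, we get
\begin{align}
T_{n,3}=\min\Bigl(D_n^-(\hat r_{1n})+D_n^-(\hat r_{2n}),~D_n^+(\hat r_{1n})+D_n^+(\hat r_{2n})+2n\log2\Bigr).
\end{align}
Since $\text{KL}(x\|1/2)\leq\log2$ for all $x\in[0,1]$, the first argument is at most $4n\log2$. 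The comparison with the second argument is not automatic, so I would avoid relying on it and bound directly:
\begin{align}
\Pro_{r^*}(T_{n,3}>c)\geq\Pro\bigl(D_n^-(\hat r_{1n})+D_n^-(\hat r_{2n})>c,~D_n^+(\hat r_{1n})+D_n^+(\hat r_{2n})+2n\log2>c\bigr).
\end{align}
For $n>c/(2\log2)$, the second event is certain, because $D_n^+\geq0$. Hence for all large $n$,
\begin{align}
\Pro_{r^*}(T_{n,3}>c)=\Pro_{1/2,1/2}\bigl(D_n^-(\hat r_{1n})+D_n^-(\hat r_{2n})>c\bigr).
\end{align}

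The limit of the right-hand side was already computed at the end of the proof of Lemma \ref{theory:type1notlocal}. The CLT gives $2\sqrt n(\hat r_{in}-1/2)\to N(0,1)$. Together with $D_n^-(\hat r_{in})=W_{in}^2\mathbf 1\{W_{in}<0\}+o_p(1)$ (Lemma \ref{theory:hn}), independence, and the continuous mapping theorem, the sum converges in distribution to $Z_1^2\mathbf 1\{Z_1<0\}+Z_2^2\mathbf 1\{Z_2<0\}$. Because $c>0$ is a continuity point of this limit, the probability converges to $\frac12\Pro(\chi_1^2>c)+\frac14\Pro(\chi_2^2>c)$. Therefore the $\liminf$ of the supremum is at least this value.

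The only delicate point is the boundary $r_3=1$, where the local expansion does not apply. It is harmless here because $\hat r_{3n}$ is degenerate, and its $2n\log2$ contribution to $Q_n^+$ diverges. So the only real step is checking that this contribution makes the $Q_n^+$ branch irrelevant once $n$ is large, which the display above verifies.
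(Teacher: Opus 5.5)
Your proposal is correct and follows essentially the same route as the paper. Like the paper, you choose $r^*=(1/2,1/2,1)\in C_+$ and note that $\hat r_{3n}=1$ a.s.\ forces $Q_n^+(\hat r_n)\geq 2n\log2>c$ for large $n$, which reduces the rejection probability to $\Pro_{1/2,1/2}\bigl(D_n^-(\hat r_{1n})+D_n^-(\hat r_{2n})>c\bigr)$; you then invoke the limit already obtained at the end of the proof of Lemma \ref{theory:type1notlocal}.
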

\begin{proof}
    Let $r^*=(1/2,1/2,1)\in C_+$. Then,
    \begin{align}
        \sup_{r\in C_+}\Pro_r(T_{n,3}>c)\geq\Pro_{r^*}(T_{n,3}>c)
        =\Pro_{r^*}(\min(Q_n^-(\hat{r}_n),Q_n^+(\hat{r}_n))>c),
    \end{align}
    where
    \begin{align}
        Q_n^-(\hat{r}_n)&=\sum_{i=1}^{3}D_n^-(\hat{r}_{in}),~~
        Q_n^+(\hat{r}_n)=\sum_{i=1}^{3}D_n^+(\hat{r}_{in}),\\
        D_n^-(x)&=
        2n\text{KL}\left(x\;\middle\|\;\frac{1}{2}\right)\mathbf{1}\left\{x<\frac{1}{2}\right\},~~
        D_n^+(x)=2n\text{KL}\left(x\;\middle\|\;\frac{1}{2}\right)\mathbf{1}\left\{x>\frac{1}{2}\right\}.
    \end{align}
    Under $r^*$, we have $\hat{r}_{3n}=1$ a.s. Therefore, $Q_n^-(\hat{r}_n)=D_n^-(\hat{r}_{1n})+D_n^-(\hat{r}_{2n})$ and $Q_n^+(\hat{r}_n)\geq2n\text{KL}\left(1\;\middle\|\;\frac{1}{2}\right)=2n\log2$ a.s. Thus, for all sufficiently large $n$ such that $2n\log2>c$, we have
    \begin{align}
        \Pro_{r^*}(\min(Q_n^-(\hat{r}_n),Q_n^+(\hat{r}_n))>c)
        =\Pro_{1/2,1/2}(D_n^-(\hat{r}_{1n})+D_n^-(\hat{r}_{2n})>c).
    \end{align}
    As shown at the end of the proof of Lemma \ref{theory:type1notlocal}, $\Pro_{1/2,1/2}(D_n^-(\hat{r}_{1n})+D_n^-(\hat{r}_{2n})>c)\to(1/2)\Pro(\chi_1^2>c)+(1/4)\Pro(\chi_2^2>c)$. Consequently,
    \begin{align}
        \liminf_{n\to\infty}\sup_{r\in C_+}\Pro_r(T_{n,3}>c)
        &\geq\liminf_{n\to\infty}\Pro_{1/2,1/2}(D_n^-(\hat{r}_{1n})+D_n^-(\hat{r}_{2n})>c)\\
        &=\frac{1}{2}\Pro(\chi_1^2>c)+\frac{1}{4}\Pro(\chi_2^2>c).
    \end{align}
    This completes the proof.
\end{proof}

\section{Implementation Details for the Haddenhorst et al. (2021) Procedure}\label{sec:haddenhorst}
In Section \ref{sec:simulations}, we compare our likelihood-ratio test with Algorithm 2 of \cite{haddenhorst2021testing}. Their procedure controls the probabilities of incorrectly declaring non-WST when WST holds, and incorrectly declaring WST when WST does not hold. Under our formulation, in which the null hypothesis is that WST does not hold, only the latter error is a Type I error. We therefore calibrate their procedure to control this error at our nominal level $\alpha$.

The procedure of \cite{haddenhorst2021testing} is based on a directed graph $G$ with vertex set $[m]$, and the set of edges is determined as follows. For each $i<j$ and each $t\in[n]$, it sequentially checks whether
\begin{align}\label{eq:sprt}
    \text{(a)}:~\hat{p}_{t,ij}>\frac{1}{2}+C_t^\text{SPRT}~~\text{or}~~
    \text{(b)}:~\hat{p}_{t,ij}<\frac{1}{2}-C_t^\text{SPRT},
\end{align}
where
\begin{align}
    C_t^\text{SPRT}=\frac{1}{2t}\left\lceil\frac{\log{(1-\gamma)/\gamma}}{\log{(1/2+h)/(1/2-h)}}\right\rceil,~~
    \gamma=\frac{\alpha}{\binom{m}{2}-\lfloor (m+1)/3\rfloor}.
\end{align}
If (a) holds before (b), then the edge $i\to j$ is added to $G$; if (b) holds before (a), then the edge $j\to i$ is added to $G$; if neither (a) nor (b) holds, then the edge remains undetermined.

In our implementation, we repeatedly generate a random permutation of all $\binom{m}{2}$ pairs and process one additional comparison for each pair in that order. After each edge is added, the procedure examines the current graph, and rejects the null hypothesis as soon as every completion of the undetermined edges of $G$ is acyclic. It stops without rejection if $G$ contains a directed cycle. If neither stopping condition is satisfied before all $n\binom{m}{2}$ comparisons have been processed, the null hypothesis is not rejected. Theorem 8.4 of \cite{haddenhorst2021testing} implies that, if $|p_{ij}-1/2|>h$ for every pair $i<j$, this procedure has Type I error at most $\alpha$. In our simulation, we set $h=0.04$ and $\alpha=0.05$.

\bibliographystylesuppl{chicago}
\bibliographysuppl{ref}

\end{document}